\documentclass[journal]{IEEEtran}
\usepackage{csquotes}
\usepackage{amsmath}
\usepackage{amsmath, amsthm, amssymb}
\usepackage{csquotes}
\usepackage{amsmath}

\usepackage{authblk}
\usepackage{hyperref}
\usepackage{float}
\usepackage{cite}
\usepackage{algorithm}
\usepackage{algorithmicx, algpseudocode}
\usepackage{multirow}

\usepackage{mathtools}
\ifCLASSINFOpdf
\usepackage[pdftex]{graphicx}
\else
\fi
\usepackage{amsmath}
\usepackage{amsfonts}
\usepackage{amssymb}
\usepackage{color}
\usepackage{array}
\usepackage{fixltx2e}
\usepackage{stfloats}
\usepackage{url}
\hyphenation{op-tical net-works semi-conduc-tor}
	
\usepackage{amsmath, amsthm, amssymb}
\newtheorem{theorem}{Theorem}

\newtheorem{corollary}{Corollary}

\usepackage{booktabs}
\usepackage{varwidth}
\usepackage[table,xcdraw]{xcolor}

\begin{document}

\title{\linespread{1} Multiple RISs-Aided Networks: Performance Analysis and Optimization}
\author{Mahmoud~Aldababsa,~\IEEEmembership{}
	Anas~M.~Salhab,~\IEEEmembership{Senior~Member,~IEEE,}
	Ali~Arshad~Nasir,~\IEEEmembership{Senior~Member,~IEEE,}
	Monjed H. Samuh,~\IEEEmembership{}
	and~Daniel Benevides~da~Costa,~\IEEEmembership{Senior~Member,~IEEE}
	\thanks{M. Aldababsa is with the Department of Electrical and Electronics Engineering, Istanbul Gelisim University, 34310, Istanbul, Turkey (e-mail: mhkaldababsa@gelisim.edu.tr).}
	\thanks{A. M. Salhab and A. A. Nasir are with the Department of Electrical Engineering, King Fahd University of Petroleum \& Minerals, Dhahran 31261, Saudi Arabia (e-mail: salhab@kfupm.edu.sa, anasir@kfupm.edu.sa).}
	\thanks{M. H. Samuh is with the Department of Applied Mathematics \& Physics, Palestine Polytechnic University, Hebron, Palestine (e-mail: monjedsamuh@ppu.edu).}
	\thanks{D. B. da Costa is with the AI \& Telecom Research Center, Technology Innovation Institute, 9639 Masdar City, Abu Dhabi, United Arab Emirates (email: danielbcosta@ieee.org).}
	\thanks{}
	\thanks{}}

\maketitle
\begin{abstract}
This paper analyzes the performance of multiple reconfigurable intelligent surfaces (RISs)-aided networks. The paper also provides some optimization results on the number of reflecting elements on RISs and the optimal placement of RISs. We first derive accurate closed-form approximations for RIS channels' distributions assuming independent non-identically distributed (i.ni.d.) Nakagami-\emph{m} fading environment. Then, the approximate expressions for outage probability (OP) and average symbol error probability are derived in closed-form. Furthermore, to get more insights into the system performance, we derive the asymptotic OP at the high signal-to-noise ratio regime and provide closed-form expressions for the system diversity order and coding gain. Finally, the accuracy of our theoretical analysis is validated through Monte-Carlo simulations. The obtained results show that the considered RIS scenario can provide a diversity order of $\frac{a}{2}K$, where $a$ is a function of the Nakagami fading parameter $m$ and the number of meta-surface elements $N$, and $K$ is the number of RISs. 
\end{abstract}
\begin{IEEEkeywords}
Reconfigurable intelligent surface, optimization, Nakagami-\emph{m} fading, outage probability, average symbol error probability.
\end{IEEEkeywords}
\IEEEpeerreviewmaketitle
\section{Introduction}

\IEEEPARstart{R}{econfigurable} intelligent surfaces (RISs) have recently attracted noticeable attention as a promising candidate for future wireless communication networks \cite{sb2021}. An RIS is an artificial surface composed of massive low-cost reconfigurable passive elements. It can reconfigure the propagation of incident wireless signals by adjusting the amplitude and phase shift of each element \cite{Renzo2}. As RISs do not require radio frequency (RF) chains, this remarkably reduces energy consumption and hardware costs, making RISs more economical and environmentally friendly than multi-antenna and relaying systems \cite{Renzo1}-\hspace{-0.01cm}\cite{Alouini1}. Accordingly, RISs represent a new low-cost/less-complicated solution to realize wireless communication with high spectral and energy efficiencies. Driven by these advantages, RISs have received extensive interest from both industry and academy to exploit their benefits fully. In the literature, single-RIS-aided systems have been extensively studied \cite{Liang}-\hspace{-0.01cm}\cite{Ferreira}. In \cite{Liang}, an overview of the basic characteristics of the RIS/antenna technology and its potential applications has been provided. A detailed overview on the state-of-the-art solutions, fundamental differences of RIS with other technologies, and the most important open research issues in this research area has been provided in \cite{Basar1}. The authors in \cite{Wu} showed that RIS has better performance than conventional massive multiple-input multiple-output systems as well as multi-antenna amplify-and-forward (AF) relaying networks while reducing the system complexity and cost. 

Recently, Yang \textit{et al.} studied in \cite{Yang1} the performance of RIS-assisted mixed indoor visible light communication/radio frequency (RF) system. They derived closed-form expressions for the outage probability (OP) and bit error rate (BER) of AF and decode-and-forward (DF) relaying schemes. In \cite{Yang4}, the authors utilized RIS to improve the quality of the source signal, which is sent to the destination through an unmanned aerial vehicle. The secrecy OP of RIS-assisted network has been derived in \cite{Yang5} in the presence of a direct link and eavesdropper. It is worthwhile to mention here that some of the previous works on RIS-assisted networks performed their analysis based on the central limit theorem (CLT), which makes it applicable only for a large number of reflecting elements \cite{Yang6}. Following other approaches and to cover any number of reflecting elements, \cite{Yang6} has derived accurate approximations for the channel distributions and performance metrics of RIS-assisted networks assuming Rayleigh fading channels. Closed-form expressions for the OP, symbol error rate (SER), and ergodic capacity of the same system were presented in \cite{Boulogeorgos}. Most recently, \cite{Ferreira} provided closed-form expressions for the bit error probability of RIS-assisted network over Nakagami-$m$ fading channels. As stated by the authors, their results are valid only for BPSK and $M$-QAM modulation techniques. Although the authors in \cite{Ferreira} considered Nakagami-$m$ fading channels, they only derived exact expressions for the error probability for the limited number of reflecting elements. In addition, they derived their approximate expressions and bounds for two specific modulation schemes, as mentioned before. Furthermore, no insights into the system performance at high signal-to-noise ratio (SNR) values were provided in that study. 

Away from single RIS networks, a few works in literature considered the case of multiple RISs in their analysis. In this regard, the existing papers could be categorized into two types: optimization papers and performance analysis papers. One of the works that proposes and solves an optimization problem for RIS-aided networks is \cite{Fang}. In that paper, the authors proposed new optimum location-based RIS selection policies in RIS-aided wireless networks to maximize the SNR for a power path-loss model in outdoor communications and an exponential path-loss model in indoor communications. The optimization problem was formulated and solved, assuming that the channel coefficients associated with different RISs are independent and identically distributed (i.i.d.) Rayleigh random variables (RVs). Another optimization problem was also proposed in \cite{Mei} where the authors exploited the line-of-sight (LoS) link between nearby RISs to construct a multi-hop cascaded LoS link between the base station (BS) and user where a set of RISs are selected to successively reflect the BS's signal so that the received signal power at the user is maximized. However, the authors only considered the impact of path-loss and ignored the impact of fading in that study. A very recent paper on optimizing the performance of RIS-aided networks is the one presented in \cite{ZYang}. Multiple RISs were utilized to serve wireless users where the energy efficiency of the network was maximized by dynamically controlling the on-off status of each RIS as well as optimizing the reflection coefficients matrix of the RISs. More on optimization problems in RIS-aided networks with multiple RISs could be found in \cite{Lyu} and \cite{George}. 

One of the early papers on performance analysis of multiple RISs networks is \cite{Jung}. The authors relied in their uplink rate analysis on the law of large numbers and the CLT to approximate the distributions of RVs, which are functions of the squared magnitude of the channel coefficient, by Gaussian distributions. Again, the CLT analysis approach could be considered accurate only at a very high number of reflecting elements. Another work that utilizes the CLT analysis approach in its evaluation of the performance of RIS-aided networks is \cite{Gala}. The authors derived tight approximations/bounds for the OP, achievable rate, and average SER of RIS-aided networks in the presence of direct link and assuming Nakagami-$m$ fading channels. In \cite{LYang}, the authors provided a general expression for the OP and approximate asymptotic expression for the sum-rate of a multi-RIS network over Rayleigh fading channels. The derivations were based on two scenarios, one approximates the RIS channels using the non-central chi-square (NCCS) distribution, and the other approximates it using the $K_{G}$ distribution. The channel amplitudes were assumed to be i.i.d. RVs with fixed mean and variance. In \cite{Zhang}, the authors discussed the impact of centralized and distributed RIS deployment strategies on the capacity region of multi-RIS-aided systems. However, in the distributed RIS deployment, the authors assumed that the channels associated with the different distributed RISs undergo i.i.d. Rayleigh fading. 

A multi-RIS-aided system for both indoor and outdoor communications was considered in \cite{Yildirim}. Aiming for low-complexity transmission, the authors proposed a RIS selection strategy that selects the RIS with the highest SNR to assist the communication. However, small-scale fading was ignored, and the performance analysis of the RIS selection strategy was not carried out. In a very recent study, the authors proposed in \cite{Do} two schemes, one called exhaustive RIS-aided scheme and the other called opportunistic RIS-aided scheme. In the exhaustive RIS-aided scheme, all RISs help in sending the source message to the destination, whereas in the opportunistic RIS-aided scheme, only the RIS of the best SNR forwards the source message to the destination. The authors showed that either a Gamma distribution or a Log-Normal distribution can be used to approximate the distribution of the magnitude of the end-to-end (e2e) channel coefficients in both schemes. Tight approximate closed-form expressions were derived for the system OP and ergodic capacity assuming i.ni.d. fading channels scenario. Most recently, a study that investigates the error probability performance of a multi-RIS-aided network has been proposed in \cite{Phan}. Closed-form expression was derived for the system SEP assuming i.i.d. Nakagami-$m$ fading channels.	
	
For the sake of performance analysis simplicity, some existing works \cite{Mei} and \cite{Zhang}-\hspace{-0.01cm}\cite{Yildirim} only considered path-loss effects and/or ignored small-scale fading effects, which is not practical. On the other hand, some works relied on the deterministic fading channel \cite{Jung} or the i.i.d. fading channel model \cite{Gala} and \cite{LYang}. Nevertheless, this assumption is too optimistic and does not reflect the real channels in practice. This can be interpreted as in distributed multi-RIS-aided systems, the RISs are installed significantly far apart, e.g., tens of meters, and hence channels between different RISs cannot be assumed to be i.i.d.. It is worth mentioning that the characterization of distributions that be utilized to statistically model the magnitude of the e2e channel coefficient of distributed multi-RIS-aided systems is still a challenging problem. One can observe that some previous works, \cite{Liang} and \cite{Jung}, adopted the assumption that the channel distribution of RIS networks can be approximated by the NCCS where the CLT analysis approach could be used. However, results based on CLT are valid only for a very high number of reflecting or meta-surface elements. Besides, it has been noticed that the aforementioned studies do not provide expressions for the system diversity order and coding gains, which are very useful metrics that provide meaningful insights into the system performance. Furthermore, the aforementioned studies do not provide key results on the impact of RIS distribution (serial or parallel), random and equal reflecting elements distributions, and RIS locations on the system performance.
	
Motivated by the above observations, the contributions of this work can be summarized as follows:
	
\begin{itemize}
\item We provide a performance analysis of multi-RIS-aided networks over Nakagami-$m$ fading channels. For that purpose, we first prove that the RIS channel distributions can be modeled/treated using the Laguerre series method \cite{Primak}. In this paper, to cover more scenarios and to have flexibility in RIS locations between the source and destination, we consider some practical multi-RIS-aided system settings. Hence, the i.ni.d. case is considered for RIS channels. In addition, both small-scale fading and path-loss effects have been considered in the derivation of RIS channel statistics, including probability density function (PDF) and cumulative distribution function (CDF). 
		
\item Utilizing the achieved SNR statistics, we derive closed-form approximate expressions for the system OP and average symbol error probability (ASEP). The derived expressions are valid for an arbitrary number of reflecting elements and non-integer values of fading parameter $m$. 
		
\item To get more insights into the system performance, the derived expression of the OP is utilized in formulating an optimization problem where the optimum number of reflecting elements that guarantees a predetermined outage performance and the optimal relative placement of RISs are determined. In addition, a closed-form simple expression is derived for the asymptotic OP as well as the diversity order and coding gain. The provided results are unique and presented for the first time in this study.
		
\item The impacts of several system parameters, including the number of reflecting elements, number of RISs, distances between the source to RIS and RIS to destination and location of RISs, Nakagami-$m$ shaping and scaling parameters, etc. are investigated under realistic conditions as shown in Section \ref{Numerical Results}.

\end{itemize}
	
The rest of this paper is organized as follows. Section \ref{SCMs} presents the system and channel models. The performance analysis is evaluated in Section \ref{EPA}. Section \ref{Opt} gives the formulation and solution of the optimization problem. Some simulation and numerical results are discussed in Section \ref{Numerical Results}. Finally, the paper is concluded in Section \ref{C}.
	
\textit{Notations and symbols:}
It is worth mentioning that the notations and symbols used throughout this paper are, respectively given in Tables \ref{TEST1} and \ref{TEST2}.

\section{System and Channel Models}\label{SCMs}

In this section, we illustrate the proposed system and channel models of multi-RIS assisted networks. Then, the RIS selection strategy is presented.
\subsection{System and Channel Models}
As shown in Fig. \ref{System model}, we consider a multi-RIS assisted network, in which a single-antenna source (S) communicates a single-antenna destination (D) with the help of $K$ RISs. Each RIS, $\left\{\text{RIS}_{k}\right\}_{k=1}^{K}$, is equipped with $N_{k}$ passive elements. The S-RIS$_{k}$ links in the first hop are assumed to undergo Nakagami-\emph{m} fading, where the channel vector between the S and RIS$_{k}$ is denoted by $\mathbf{h}_{k}\in \mathbb{C}^{N_{k}\times1}$, $\mathbf{h}_{k} = \left[h^{(1)}_{k}, ..., h^{(i)}_{k}, ..., h^{(N_{k})}_{k}\right]^{T}$, $h^{(i)}_{k}=\frac{1}{\sqrt{P_{L,k,1}}}\alpha^{(i)}_{k}e^{-j\phi^{(i)}_{k}}$ denotes the channel coefficient between the S and RIS$_{k}$ $i$th element, where $P_{L,k,1}$, $\alpha^{(i)}_{k}$, and $\phi^{(i)}_{k}$ refer to the path-loss, channel amplitude, and channel phase, respectively for the first hop. Likewise, the RIS$_{k}$-D links are also assumed to have Nakagami-\emph{m} fading channel model, where the channel vector between the RIS$_{k}$ and D is denoted by $\mathbf{g}_{k}\in \mathbb{C}^{N_{k}\times1}$, $\mathbf{g}_{k} = \left[g^{(1)}_{k}, ..., g^{(i)}_{k}, ..., g^{(N_{k})}_{k}\right]^{T}$, $g^{(i)}_{k} = \frac{1}{\sqrt{P_{L,k,2}}}\beta^{(i)}_{k}e^{-j\Phi^{(i)}_{k}}$ denotes the channel coefficient between $i$th RIS element and D, where $P_{L,k,2}$, $\beta^{(i)}_{k}$, and $\Phi^{(i)}_{k}$ refer to the path-loss, channel amplitude, and channel phase, respectively for the second hop. The reflection coefficients of the RIS$_{k}$ are denoted by the entries of the diagonal matrix $\mathbf{\Theta}_{k}\in \mathbb{C}^{N_{k}\times N_{k}}$, for the $i$th element. Under the full reflection assumption, we have $\Theta^{(i,i)}_{k}=e^{j\theta^{(i)}_{k}}$, where $\theta^{(i)}_{k}\in[0,2\pi)$.

\begin{table}[t!]
	\centering
	\caption{Notations used in the paper.}
	\label{TEST1}
	\begin{tabular}{|l|l|}	
		\hline
		\textbf{Notation}          & \textbf{Definition}                                               \\ \hline
		$ P_{r}\left[\cdot\right] $           & Probability operator                                                      \\ \hline
		$F_{X}\left(x\right)$      & Cumulative distribution function (CDF) of a random variable $ X $ \\ \hline
		$f_{X}\left(x\right)$      & Probability density function (PDF) of a random variable $ X $     \\ \hline
		$|\cdot|$                  & Absolute value                                                    \\ \hline
		$\mathbb{C}^{m\times n}$  & Set of matrices with dimension $m\times n$                                      \\ \hline
		
		$(\cdot)^{T}$              & Transpose operator                                              \\ \hline
		${E}\left(\cdot\right)$               & Expectation operator                                              \\ \hline
		${Var}\left(\cdot\right)$               & Variance operator                                              \\ \hline
		$\gamma(\cdot,\cdot)$        & Lower incomplete Gamma function                                   \\ \hline
		$\Gamma(\cdot)$            & Gamma function                                                    \\ \hline
		$\exp(\cdot)$             & Exponential function    \\ \hline
		$ Q(\cdot) $             &   Q-function    \\ \hline
	\end{tabular}
\end{table}

\begin{table}[t]
	\centering
	\caption{Symbols used in the paper.}
	\label{TEST2}
	\begin{tabular}{|l|l|}	
		\hline
		\textbf{Symbol}          & \textbf{Definition}                                               \\ \hline
		$K$      & Number of RISs     \\ \hline
		$N_{k}$      & Number of reflecting elements of $k$th RIS (RIS$_{k} $)    \\ \hline
		$h^{(i)}_{k}$           & Channel coefficient between source (S) and RIS$_{k}$ $i$th element                                                      \\ \hline
		$\mathbf{h}_{k}$      & Channel vector between the S and RIS$_{k}$  \\ \hline
		$g^{(i)}_{k}$      & channel coefficient between $i$th RIS element and destination (D)     \\ \hline
		$\mathbf{g}_{k}$                  & Channel vector between the RIS$_{k}$ and D                                                   \\ \hline
		$P_{L,k,n}$                   & Path loss belongs to RIS$_{k}$ in $n$th hop                                                  \\ \hline
		$\alpha^{(i)}_{k}$                   & Channel amplitude belongs to RIS$_{k}$ in the first hop                                                  \\ \hline
		$\beta^{(i)}_{k}$                   & Channel amplitude belongs to RIS$_{k}$ in the second hop                                                  \\ \hline
		$\phi^{(i)}_{k}$                  & Channel phase belongs to RIS$_{k}$ in the first hop                                                  \\ \hline
		$\Phi^{(i)}_{k}$                   & Channel phase belongs to RIS$_{k}$ in the second hop                                                 \\ \hline
		$\mathbf{\Theta}_{k}$              & Diagonal matrix for reflection coefficients of the RIS$_{k}$                                              \\ \hline
		$E_{s}$               & Average power of the transmitted signal                                              \\ \hline
		$\lambda$ and $ f_{c} $               & Wavelength and carrier frequency, respectively                                            \\ \hline
		$ G_{k,n}$       & Gain of the RIS$_{k}$ in $n$th hop                                   \\ \hline
		$\epsilon_{k}$            & Efficiency of the RIS$_{k}$                                                   \\ \hline
		$ d_{k,1} $ & Distance from the source-to-RIS$_{k}$ \\ \hline
		$ d_{k,2} $ & Distance from the RIS$_{k}$-to-destination \\ \hline
		$\bar{\gamma}$ and $\gamma_\text{out}$           & Average SNR and predetermined outage threshold, respectively    \\ \hline
		$m_{k,n}$         & Shape parameter for Nakagami distribution of RIS$_{k}$ in $n$th hop   \\ \hline
		$\Omega_{k,n}$           & Spread parameter for Nakagami distribution of RIS$_{k}$ in $n$th hop   \\ \hline
		
	\end{tabular}
\end{table}

The signal received at D from the reflected signals of RIS$_{k}$ can be expressed as
\begin{align} 
\label{receivedsignal}
y_{k} &= \sqrt{\frac{E_{s}}{P_{L,k}}}\sum_{i = 1}^{N_{k}}\alpha^{(i)}_{k}\beta^{(i)}_{k}s + n_{k},
\end{align}
where $E_{s}$ is the average power of the transmitted signal, $s$ is the transmitted signal, ${n}_{k}$ denotes the additive white Gaussian noise (AWGN) sample with zero mean and variance $N_{0}$, and $P_{L,k}$ denotes the overall path-loss of the RIS$_{k}$-assisted path. Here, $ P_{L,k} = P_{L,k,1}P_{L,k,2} $ and given by
\begin{align} 
\label{pathloss}
P_{L,k} &= \left(\left(\frac{\lambda}{4\pi}\right)^{4}\frac{G_{k,1}G_{k,2}}{d^{2}_{k,1}d^{2}_{k,2}}\epsilon_{k}\right)^{-1},
\end{align}
where $\lambda$ is the wavelength, $ G_{k,1}$ and $G_{k,2}$ are the gains of the RIS$_{k}$ in the first and second hops, respectively, and $\epsilon_{k}$ is the efficiency of RIS$_{k}$, which is described as ratio of transmitted signal power by RIS to received signal power by RIS. In this paper, it is assumed that $\epsilon_{k} = 1$. Additionally, $ d_{k,1} $ and $ d_{k,2} $ represent the distances from the source-to-RIS$_{k}$ and RIS$_{k}$-to-destination, respectively.

\begin{figure}[t]
	\includegraphics[width=.4\textwidth]{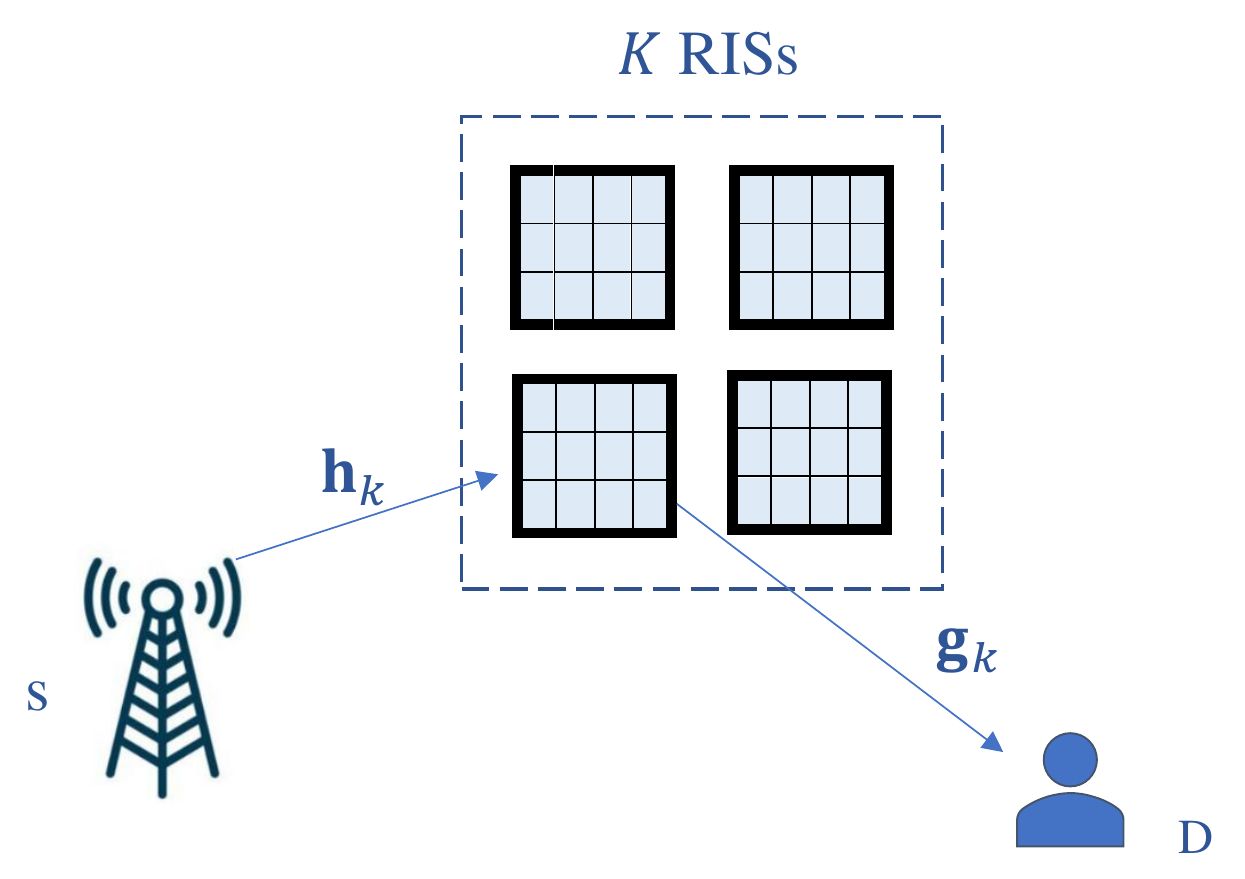}
	\centering
	\caption{A multi-RIS aided communication system model.}	
	\label{System model}
	\vspace{-0.5cm}
\end{figure}

The RIS$_{k}$ optimizes the phase reflection coefficients to maximize the received SNR at D, by aligning the phases of the reflected signals to the sum of the phases of its incoming and outgoing fading channels. Thus, the maximized e2e SNR for RIS$_{k}$ can be expressed as 
\begin{equation} \label{eq:gamma}
\gamma_{k} =\frac{E_{s}}{N_{0}P_{L,k}}\left ({\sum _{i=1}^{N_{k}} \alpha^{(i)}_{k}\beta^{(i)}_{k} }\right)^{2}= \frac{\bar{\gamma}}{P_{L,k}}Z^{2}_{k},
\end{equation}
where $\bar{\gamma}=\frac{E_{s}}{N_{0}}$ is the average SNR. 

{\color{black}{The PDF of the Nakagami-$m$ RV, charactarized by $m$ (shape parameter) and $\Omega$ (scale parameter), is given by
		$$
		\frac{2 m^m}{\Gamma(m) \Omega^m} x^{2m-1} \exp\left\{-\frac{m}{\Omega}x^2\right\}.
		$$	
		
		Based on the proposed distribution of $\alpha^{(i)}_{k}$ and $\beta^{(i)}_{k}$, the true distribution of $\gamma_{k}$ can well be approximated by the first term of a Laguerre series expansion, which results in gamma distribution, as proved in Theorem~\ref{thm:PDFgammak}.
		
		\begin{theorem}\label{thm:PDFgammak}
			The PDF of e2e SNR for RIS$_{k}$, $\gamma_{k}$, expressed in \eqref{eq:gamma}, is given by
			\begin{align}\label{eq:PDFgamma}
				f_{\gamma_{k}}(\gamma)\simeq\frac{\left(\frac{ P_{L,k} }{\bar{\gamma}}\gamma\right)^{\frac{a_{k}-1}{2}} \exp \left(-\sqrt{\frac{ P_{L,k}}{\bar{\gamma}b^{2}_{k}}\gamma}\right)}{2\Gamma (a_{k}) b^{a_{k}}_{k}  \sqrt{\frac{\bar{\gamma} }{P_{L,k}}\gamma }}.
			\end{align}
			Accordingly, the CDF of $\gamma_{k}$ is given by
			\begin{align}\label{eq:CDFgamma}
				F_{\gamma_{k}}(\gamma)\simeq \frac{\gamma \left(a_{k},\sqrt{\frac{P_{L,k} }{\bar{\gamma}b^{2}_{k}}\gamma}\right)}{\Gamma (a_{k})}.
			\end{align}
		\end{theorem}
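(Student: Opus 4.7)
My plan is to exploit the additive structure of $Z_k := \sum_{i=1}^{N_k}\alpha_k^{(i)}\beta_k^{(i)}$ in \eqref{eq:gamma} and apply the Laguerre series representation cited from \cite{Primak}: any nonnegative RV $Z$ admits an expansion in generalized Laguerre polynomials against a gamma kernel, and retaining only the leading term is equivalent to replacing $Z$ by a gamma RV whose first two moments match those of $Z$. So the whole theorem reduces to (i) computing $E[Z_k]$ and $\mathrm{Var}(Z_k)$, (ii) solving for the matched gamma parameters $a_k,b_k$, and (iii) pushing that gamma law through the monotone transformation $\gamma=(\bar\gamma/P_{L,k})Z_k^{2}$.

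First I would compute the product-moment building blocks. Using the Nakagami moment formula $E[X^r]=\Gamma(m+r/2)/\Gamma(m)\,(\Omega/m)^{r/2}$, I get, for each reflecting element,
\begin{align*}
\mu_i &:= E[\alpha_k^{(i)}\beta_k^{(i)}]=\frac{\Gamma(m_{k,1}+\tfrac12)\Gamma(m_{k,2}+\tfrac12)}{\Gamma(m_{k,1})\Gamma(m_{k,2})}\sqrt{\frac{\Omega_{k,1}\Omega_{k,2}}{m_{k,1}m_{k,2}}},\\
s_i^{2} &:= \mathrm{Var}(\alpha_k^{(i)}\beta_k^{(i)})=\Omega_{k,1}\Omega_{k,2}-\mu_i^{2},
\end{align*}
where I used $E[(\alpha_k^{(i)})^{2}]=\Omega_{k,1}$ and independence across hops. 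Since the $N_k$ summands are independent, $E[Z_k]=\sum_i\mu_i$ and $\mathrm{Var}(Z_k)=\sum_i s_i^{2}$. Matching these to a $\mathrm{Gamma}(a_k,b_k)$ (mean $a_kb_k$, variance $a_kb_k^{2}$) gives
\begin{equation*}
a_k=\frac{(E[Z_k])^{2}}{\mathrm{Var}(Z_k)},\qquad b_k=\frac{\mathrm{Var}(Z_k)}{E[Z_k]},
\end{equation*}
so that $f_{Z_k}(z)\simeq z^{a_k-1}e^{-z/b_k}/(\Gamma(a_k)b_k^{a_k})$ by the first-term Laguerre argument.

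Next I would perform the change of variable $\gamma=(\bar\gamma/P_{L,k})z^{2}$. Then $z=\sqrt{P_{L,k}\gamma/\bar\gamma}$ and $|dz/d\gamma|=\tfrac12\sqrt{P_{L,k}/(\bar\gamma\gamma)}$; substituting these into $f_{Z_k}$ and simplifying the exponents reproduces exactly the expression for $f_{\gamma_k}(\gamma)$ in \eqref{eq:PDFgamma}. The CDF statement \eqref{eq:CDFgamma} then follows immediately, because monotonicity gives $F_{\gamma_k}(\gamma)=P(Z_k\le\sqrt{P_{L,k}\gamma/\bar\gamma})$, and the right-hand side is the standard lower-incomplete gamma form $\gamma(a_k,\sqrt{P_{L,k}\gamma/(\bar\gamma b_k^{2})})/\Gamma(a_k)$; alternatively one can integrate \eqref{eq:PDFgamma} from $0$ to $\gamma$ with the substitution $u=\sqrt{P_{L,k}\tau/(\bar\gamma b_k^{2})}$.

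The main obstacle, conceptually, is not the algebra of the moments (which is just bookkeeping with Gamma functions) but justifying why truncating the Laguerre expansion at its first term is a legitimate approximation here. The cleanest justification is the standard one for the Primak method: the Laguerre coefficients beyond the first vanish precisely when the mean and variance are matched to those of the gamma basis, and the higher-order coefficients remain small whenever the skewness and kurtosis of $Z_k$ are close to those of a gamma RV with the matched parameters. For sums of $N_k$ independent positive terms this becomes increasingly true as $N_k$ grows, and in any case the resulting gamma surrogate preserves the first two moments of $Z_k$ exactly, which is the sense in which \eqref{eq:PDFgamma}--\eqref{eq:CDFgamma} are to be read as accurate closed-form approximations rather than identities.
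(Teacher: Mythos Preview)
Your proposal is correct and follows essentially the same route as the paper's own proof: compute the first two moments of each summand $\alpha_k^{(i)}\beta_k^{(i)}$, invoke the first-term Laguerre (gamma) approximation from \cite{Primak} for $Z_k$ with the moment-matched $a_k,b_k$, and then push through the monotone map $\gamma=(\bar\gamma/P_{L,k})Z_k^2$. The only cosmetic difference is that the paper first writes down the double-Nakagami density of $V_k^{(i)}=\alpha_k^{(i)}\beta_k^{(i)}$ (involving $K_{m_{k,1}-m_{k,2}}$) and cites its $r$th moment from \cite{Shankar2004}, whereas you bypass that intermediate PDF and obtain the same moments directly from independence and the Nakagami moment formula; your path is slightly leaner but otherwise identical.
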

		\begin{proof}
			Let $V^{(i)}_{k}=\alpha^{(i)}_{k}\beta^{(i)}_{k}$, where $\alpha^{(i)}_{k}$ has a Nakagami-$m$ distribution with shape parameter $m_{k,1}$ and scale parameter $\Omega_{k,1}$, and $\beta^{(i)}_{k}$ has a Nakagami-$m$ distribution with shape parameter $m_{k,2}$ and scale parameter $\Omega_{k,2}$. Also, it is assumed that $\alpha^{(i)}_{k}$ and $\beta^{(i)}_{k}$ are independent. Then, it can be shown that the PDF of $V^{(i)}_{k}$ is given by
			\begin{align}\label{eq:DoubleNakag}
				f_{V^{(i)}_{k}}(v)&=\frac{4(\Theta_1\Theta_2)^{m_{k,1}+m_{k,2}}}{\Gamma(m_{k,1})\Gamma(m_{k,2})} v^{m_{k,1}+m_{k,2}-1} \nonumber\\
				&\times K_{m_{k,1}-m_{k,2}}\left(2\Theta_1\Theta_2 v\right),
			\end{align}
			where $\Theta_j=\sqrt{\frac{m_{k,j}}{\Omega_{k,j}}}\,(j=1,2)$, and $K_{\nu}(\cdot)$ is the modified $\nu$-order Bessel function of the second kind \cite[Eq. (8.432)]{Grad.}.
			Now, let $Z_k=\sum_{i=1}^{N_{k}} V^{(i)}_{k}$. As $Z_k$ is the sum of i.i.d. positive RVs, and according to \cite[Sec. 2.2.2]{Primak}, the PDF of $Z_k$ can well be approximated by the first term of a Laguerre series expansion, which results in gamma distribution with PDF given by
			\begin{equation}\label{eq:PDFgamma}
				f_{Z_k}(z)\simeq\frac{1}{\Gamma{\left(a_k\right)}b_k^{a_k}}z^{a_k-1}\exp\left\{-\frac{z}{b_k}\right\},
			\end{equation}
			where the shape parameter $a_k=\frac{(E(Z_k))^2}{Var(Z_k)}$ and the scale parameter $b_k=\frac{Var(Z_k)}{E(Z_k)}$. As the expectation is a linear operator, $E(Z_k)=N_k E\left(V^{(i)}_{k}\right)$, and because of independence $Var(Z_k)=N_k Var\left(V^{(i)}_{k}\right)$. It can be shown that the $r^{th}$ moment of $V^{(i)}_{k}$, \cite[Eq. (5)]{Shankar2004}, is given by
			\begin{equation}\label{eq:rthmomentV}
				E(V^{(i)r}_{k})=\frac{\Gamma \left(m_{k,1}+\frac{r}{2}\right) \Gamma \left(m_{k,2}+\frac{r}{2}\right)}{(\Theta_1\Theta_2)^r\Gamma (m_{k,1}) \Gamma (m_{k,2})}.
			\end{equation}
			Finally, as $Z_k$ is a positive RV, the PDF of $\gamma_{k}$ can be obtained by applying the transformation of RVs technique, and this completes the proof.
		\end{proof}
		
		\subsection{RIS Selection Strategy}\label{sec:RIS_sel_st}
		Using all the RISs will increase the overall system complexity. Nevertheless, utilizing RIS selection over multiple RISs will provide a low-complexity and cost-effective transmission while many advantages of multiple RIS-aided systems are still preserved. In this system, we consider that one out of $K$ RISs is selected to aid the communications. Specifically, an RIS selection is carried out over $K$ RISs to select the RIS, which has maximum SNR. Accordingly, the maximized e2e SNR of the selected RIS can be expressed as
		\begin{equation}\label{eq:RISselection}
			\gamma^{*}=\underset{k= 1,...,K}{{\max}}\{\gamma_{k}\}. 
		\end{equation}

It is worth mentioning that since RISs are passive, the source has to estimate all channels and determine the RIS with the highest SNR. In order to realize \eqref{eq:RISselection}, we assume that the source has the necessary knowledge of the channel-state-information (CSI). We also assume a time-division duplexing (TDD) system. During the training period, assuming that uplink and downlink channels are reciprocal, the destination sends pilot symbols to the source. Next, the source estimates the channel coefficients for each RIS and then specifies RIS with maximum SNR \cite{Alwazani2020}.

		\begin{theorem}
			The CDF and PDF of $\gamma^{*}$, expressed in \eqref{eq:RISselection} can be, respectively given by
			\begin{align}\label{eq:CDFRISselectionsimple}
				F_{\gamma^{*}}(\gamma)& = \sum_{n_{1} = 0}^{\infty}...\sum_{n_{K} = 0}^{\infty} \prod_{k=1}^{K}\frac{\left(-1\right)^{n_{k}}\left(\sqrt{\frac{P_{L,k} }{\bar{\gamma}b^{2}_{k}}}\right)^{a_{k} + n_{k}}}{n_{k}!\left(a_{k} + n_{k}\right)\Gamma (a_{k})}\nonumber\\
				&\times \gamma^{\frac{\sum_{k=1}^{K}\left(a_{k} + n_{k}\right)}{2}},
			\end{align}
			and 
			{\footnotesize\begin{align}\label{eq:PDFRISselection}
					f_{\gamma^{*}}(\gamma) &= \sum_{j = 1}^{K}\frac{\left(\frac{ P_{L,j} }{\bar{\gamma}}\gamma\right)^{\frac{a_{j}-1}{2}} \exp \left(-\sqrt{\frac{ P_{L,j}}{\bar{\gamma}b^{2}_{j}}\gamma}\right)}{2 b^{a_{j}}_{j} \Gamma (a_{j}) \sqrt{\frac{\bar{\gamma} }{P_{L,j}}\gamma }}\sum_{k\ne j, k=1}^{\infty} ... \sum_{k\ne j, k=K}^{\infty}\nonumber\\
					& \prod_{k\ne j, k=1}^{K}\frac{\left(-1\right)^{n_{k}}\left(\sqrt{\frac{P_{L,k} }{\bar{\gamma}b^{2}_{k}}}\right)^{a_{k} + n_{k}}}{n_{k}!\left(a_{k} + n_{k}\right)\Gamma (a_{k})}\times \gamma^{\frac{\sum_{k\ne j, k=1}^{K}\left(a_{k} + n_{k}\right)}{2}}.
			\end{align}}
		\end{theorem}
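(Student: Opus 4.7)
The plan is to exploit the order-statistics structure of the maximum. Since the $K$ RIS branches correspond to physically separate surfaces with independent small-scale fading, the RVs $\gamma_1,\dots,\gamma_K$ are mutually independent. Hence, by the standard CDF-of-maximum identity,
\begin{equation*}
F_{\gamma^{*}}(\gamma)=\prod_{k=1}^{K} F_{\gamma_{k}}(\gamma),
\end{equation*}
into which I would substitute the approximation \eqref{eq:CDFgamma} obtained in Theorem~\ref{thm:PDFgammak}.

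Next, I would expand each factor using the Maclaurin series of the lower incomplete gamma function, $\gamma(s,x)=\sum_{n=0}^{\infty}\frac{(-1)^{n}x^{s+n}}{n!(s+n)}$, applied with $s=a_{k}$ and $x=\sqrt{\tfrac{P_{L,k}}{\bar{\gamma}b_{k}^{2}}\gamma}$. Each single-branch CDF then becomes a power series in $\gamma^{1/2}$ with coefficients involving $a_{k},b_{k},P_{L,k}$. The product of $K$ such series is, after distributing, a $K$-fold nested sum indexed by $(n_{1},\dots,n_{K})$ whose generic term is exactly the factorized expression appearing in \eqref{eq:CDFRISselectionsimple}; combining the $\sqrt{\gamma}$ factors across all $k$ yields the overall exponent $\tfrac{1}{2}\sum_{k=1}^{K}(a_{k}+n_{k})$, matching the claimed form.

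For the PDF, I would differentiate $F_{\gamma^{*}}$ using the product rule, which gives
\begin{equation*}
f_{\gamma^{*}}(\gamma)=\sum_{j=1}^{K} f_{\gamma_{j}}(\gamma)\prod_{\substack{k=1\\k\ne j}}^{K}F_{\gamma_{k}}(\gamma).
\end{equation*}
Plugging in \eqref{eq:PDFgamma} for $f_{\gamma_{j}}$ and the series expansion above for the remaining $K-1$ CDFs yields the stated expression \eqref{eq:PDFRISselection}, where the $K-1$ nested sums are those with the index $k\ne j$.

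The main technical obstacle is justifying the termwise differentiation and the interchange of the $K$-fold infinite summation with the product, since each series for $F_{\gamma_{k}}$ is an alternating series in $\sqrt{\gamma}$. I would argue that on any compact subset of $\gamma>0$ the series for $\gamma(a_{k},\cdot)$ converges absolutely and uniformly (it is an entire function of its argument), so Fubini and termwise differentiation apply, and the Cauchy product yields the displayed $K$-fold sum. Aside from this, the proof is essentially bookkeeping: collecting the $\gamma$-exponents, the $(-1)^{n_{k}}$ signs, and the $(a_{k}+n_{k})^{-1}$ and $\Gamma(a_{k})^{-1}$ factors, exactly as they appear in \eqref{eq:CDFRISselectionsimple}--\eqref{eq:PDFRISselection}.
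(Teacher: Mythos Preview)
Your proposal is correct and follows essentially the same route as the paper: the product formula $F_{\gamma^{*}}=\prod_{k}F_{\gamma_{k}}$ from order statistics, the series expansion of the lower incomplete gamma function (the paper cites \cite[Eq.~(8.354.1)]{Grad.}) for each factor, and the product-rule identity $f_{\gamma^{*}}=\sum_{j}f_{\gamma_{j}}\prod_{k\ne j}F_{\gamma_{k}}$ for the PDF. Your added remark on absolute/uniform convergence to justify the interchange is a welcome rigor point that the paper omits.
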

		
		\begin{proof} 
			According to the order statistics theory, the CDF of $\gamma^{*}$ can be formulated as
			\begin{equation}\label{eq:CDFRISselection}
				F_{\gamma^{*}}(\gamma) = \prod_{k = 1}^{K}F_{\gamma_{k}}(\gamma). 
			\end{equation}
			With the help of $\gamma\left(\alpha, x \right) = \sum_{n = 0}^{\infty} \frac{\left(-1\right)^{n}x^{\alpha+n}}{n!\left(\alpha + n\right)}$ \cite[Eq. (8.354.1)]{Grad.}, the lower incomplete Gamma function in \eqref{eq:CDFgamma} can be expanded into series. With some mathematical simplifications, we can obtain the CDF of $\gamma^{*}$ as in \eqref{eq:CDFRISselectionsimple}. The PDF of $\gamma^{*}$ can be obtained by taking the derivative of \eqref{eq:CDFRISselection}, $ f_{\gamma^{*}}(\gamma) = \frac{d }{d x}F_{\gamma^{*}}(\gamma)= \frac{d}{d x}\prod_{k = 1}^{K}F_{\gamma_{k}}(\gamma)=\sum_{j = 1}^{K}f_{\gamma_{j}}(\gamma)\prod_{k\ne j, k=1}^{K}F_{\gamma_{k}}(\gamma) $. Substituting \eqref{eq:PDFgamma} and \eqref{eq:CDFgamma}, then we achieve the PDF of $\gamma^{*}$ as in \eqref{eq:PDFRISselection} and then the proof is completed.
		\end{proof}
}}

\section{Performance Analysis}\label{EPA}
In this section, we derive theoretical expressions for the system OP and ASEP. 
\subsection{Outage Probability}
The derivation of system OP is summarized in the following theorem and corollary.
\begin{theorem}
The OP can be given by
\begin{equation}\label{eq:Pout}
P_{out}=F_{\gamma^{*}}(\gamma_\text{out}).
\end{equation}
\end{theorem}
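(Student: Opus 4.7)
The statement is essentially a direct application of the definition of outage probability, so the proof plan is short. My plan is to start from the standard definition: the system is in outage whenever the instantaneous e2e SNR of the selected RIS link falls below the predetermined threshold $\gamma_\text{out}$, i.e.\ $P_{out} = P_r[\gamma^{*} < \gamma_\text{out}]$. Since $\gamma^{*}$ is a non-negative random variable with a well-defined CDF $F_{\gamma^{*}}(\cdot)$, the probability on the right-hand side is exactly $F_{\gamma^{*}}(\gamma_\text{out})$ by definition, which gives \eqref{eq:Pout}.

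Having reduced the problem to evaluating $F_{\gamma^{*}}$ at a specific point, the second step is to invoke the CDF of $\gamma^{*}$ derived in the preceding theorem, namely expression \eqref{eq:CDFRISselectionsimple}. Substituting $\gamma = \gamma_\text{out}$ into that series yields an explicit closed-form expression for $P_{out}$. Because \eqref{eq:CDFRISselectionsimple} was in turn obtained by combining the per-link CDF $F_{\gamma_k}$ from Theorem~\ref{thm:PDFgammak} via the order-statistics product $F_{\gamma^{*}}(\gamma)=\prod_{k=1}^{K}F_{\gamma_k}(\gamma)$ (which follows from independence of the $\gamma_k$'s across RISs), the final outage expression inherits the same regime of validity: arbitrary $N_k$, non-integer $m_{k,n}$, and i.ni.d.\ Nakagami-$m$ fading.

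There is essentially no obstacle here in the mathematical sense; the statement is tautological once the CDF of $\gamma^{*}$ is in hand. The only minor subtlety worth pointing out in the write-up is to justify writing $P_r[\gamma^{*} < \gamma_\text{out}] = P_r[\gamma^{*} \le \gamma_\text{out}]$, which is immediate because the approximating gamma/Laguerre distribution used for each $\gamma_k$ is absolutely continuous, so $\gamma^{*}$ has no point mass at $\gamma_\text{out}$. With that remark, the proof of \eqref{eq:Pout} is complete, and the result serves as the entry point for the subsequent asymptotic and diversity-order analysis.
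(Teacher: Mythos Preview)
Your proposal is correct and mirrors the paper's own proof: both start from the definition $P_{out}=\mathrm{Pr}[\gamma^{*}\le\gamma_\text{out}]$ and then identify this probability with $F_{\gamma^{*}}(\gamma_\text{out})$, invoking \eqref{eq:CDFRISselectionsimple} for the explicit form. Your added remark on absolute continuity (to equate the strict and non-strict inequalities) is a nice touch that the paper omits, but otherwise the arguments are identical.
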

\begin{proof} 
The system OP is defined as the event where the e2e SNR goes below a predetermined outage threshold $\gamma_\text{out}$. Mathematically speaking $P_{out}=\mathrm{Pr}\left[\gamma\leq\gamma_{th}\right]$,
where $\mathrm{Pr}[.]$ is the probability operation. Thus, using (\ref{eq:CDFRISselectionsimple}), we have \eqref{eq:Pout} and then the proof is completed.
\end{proof}

In order to gain further insights into the OP performance, we give the following corollary.
\begin{corollary} Considering the asymptotic behavior of OP, for $\bar{\gamma}\rightarrow \infty$, the OP is given by
\begin{equation}\label{eq:AsympOut}
P_{ out}^{\infty} = \prod_{k = 1}^{K}\left(\frac{b^{2}_{k}}{\gamma_\text{out}\left(a_{k}!\right)^{-\frac{2}{a_{k}}}P_{L,k}}\bar{\gamma}\right)^{-\frac{a_{k}}{2}}.
\end{equation}
\end{corollary}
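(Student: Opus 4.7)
The plan is to start from the exact expression $P_{out}=F_{\gamma^{*}}(\gamma_\text{out})=\prod_{k=1}^{K} F_{\gamma_{k}}(\gamma_\text{out})$ given by the previous theorem, and then extract the dominant behavior of each $F_{\gamma_{k}}(\gamma_\text{out})$ as $\bar{\gamma}\to\infty$. The key observation is that inside $F_{\gamma_{k}}(\gamma_\text{out}) = \gamma\!\left(a_{k},\sqrt{P_{L,k}\gamma_\text{out}/(\bar{\gamma}b_{k}^{2})}\right)/\Gamma(a_{k})$, the argument $x_{k}:=\sqrt{P_{L,k}\gamma_\text{out}/(\bar{\gamma}b_{k}^{2})}$ tends to $0$ in the high-SNR regime, so the incomplete gamma function can be replaced by its leading-order behavior around zero.

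Concretely, I would invoke the same series representation already used in the proof of the previous theorem, namely $\gamma(\alpha,x)=\sum_{n=0}^{\infty}\frac{(-1)^{n}x^{\alpha+n}}{n!(\alpha+n)}$, and keep only the $n=0$ term, which gives $\gamma(a_{k},x_{k})\sim x_{k}^{a_{k}}/a_{k}$ as $x_{k}\to 0$. Using $a_{k}\Gamma(a_{k})=\Gamma(a_{k}+1)$, which the paper writes compactly as $a_{k}!$, this yields the per-link asymptote
\begin{equation*}
F_{\gamma_{k}}(\gamma_\text{out})\;\simeq\;\frac{1}{a_{k}!}\left(\frac{P_{L,k}\,\gamma_\text{out}}{\bar{\gamma}\,b_{k}^{2}}\right)^{a_{k}/2}.
\end{equation*}

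I would then multiply these $K$ expressions together and perform a purely algebraic rewriting to match the stated form: pull the constant $1/a_{k}!$ inside the exponent $a_{k}/2$ by writing $1/a_{k}! = \bigl((a_{k}!)^{-2/a_{k}}\bigr)^{a_{k}/2}$, then invert the base to turn the positive exponent $a_{k}/2$ into the negative exponent $-a_{k}/2$ that appears in \eqref{eq:AsympOut}. This directly reproduces the product $\prod_{k=1}^{K}\bigl(b_{k}^{2}\bar{\gamma}/[(a_{k}!)^{-2/a_{k}}\gamma_\text{out}P_{L,k}]\bigr)^{-a_{k}/2}$, with the sign of the exponent identifying $\tfrac{1}{2}\sum_{k}a_{k}$ as the diversity order.

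There is no serious obstacle; the only mildly delicate point is justifying that only the $n=0$ term of the series matters as $\bar{\gamma}\to\infty$. This follows because the subsequent terms scale as $x_{k}^{a_{k}+n}$ with $n\geq 1$, i.e., as $\bar{\gamma}^{-(a_{k}+n)/2}$, and are therefore strictly of lower order than the leading $\bar{\gamma}^{-a_{k}/2}$ contribution. Since each factor $F_{\gamma_{k}}(\gamma_\text{out})$ is bounded and its asymptotic expansion is uniform on the relevant range, the limit passes through the finite product, which completes the argument.
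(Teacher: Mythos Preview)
Your proposal is correct and follows essentially the same approach as the paper: the paper also writes $P_{out}^{\infty}=\prod_{k=1}^{K}F_{\gamma_{k}}^{\infty}(\gamma_\text{out})$, expands each factor via the series representation $\gamma(\alpha,x)=\sum_{n\geq 0}\frac{(-1)^{n}x^{\alpha+n}}{n!(\alpha+n)}$ (Gradshteyn--Ryzhik Eq.~(8.354.1)), and then retains only the $n=0$ term as $\bar{\gamma}\to\infty$. Your write-up is in fact slightly more explicit than the paper's, both in the algebraic step $1/a_{k}! = \bigl((a_{k}!)^{-2/a_{k}}\bigr)^{a_{k}/2}$ and in justifying why the higher-order terms are negligible.
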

\begin{proof}
The asymptotic OP can be calculated by $P^{\infty}_{out}=F^{\infty}_{\gamma^{*}}(\gamma_\text{out})=\prod_{k = 1}^{K}F^{\infty}_{\gamma_{k}}(\gamma_\text{out})$. Then, using \cite[Eq. (8.354.1)]{Grad.}, we get
$F^{\infty}_{\gamma_{k}}(\gamma_\text{out})= \sum_{n_{k}=0}^{\infty}\frac{(-1)^{n_{k}}\left(\sqrt{\frac{\gamma_\text{out}}{\frac{\bar{\gamma}}{P_{L,k}}}}\right)^{a_{k}+n_{k}}}{(a_{k}+n_{k})b^{a_{k}+n_{k}}_{k}\Gamma(a_{k})} $. As $\bar{\gamma}\rightarrow\infty$, this expression is only dominated by the first term in summation. Upon considering that, we get \eqref{eq:AsympOut}, and then the proof is completed.
\end{proof}

In the very high average SNR region, the OP can expressed as $P_{ out}^{\infty}=(G_{c}\bar{\gamma})^{-G_{ d}}$, where $G_{ d}$ is the gain in diversity and $G_{ c}$ is the gain in coding \cite{Alouinib}. From Corollary 1, in the case of i.i.d., i.e., $a = a_{k}$, $b = b_{k}$, and $ P_{L} = P_{L,k} $, $ P_{ out}^{\infty} = \left(\left(\frac{b^{2}}{\gamma_\text{out}\left(a!\right)^{-\frac{2}{a}}P_{L}}\right)^{\frac{1}{K}}\bar{\gamma}\right)^{-\frac{a}{2}K} $. Accordingly, the coding gain is $G_{c}=\left(\frac{b^{2}}{\gamma_\text{out}(a!)^{-\frac{2}{a}}P_{L}}\right)^{\frac{1}{K}}$ and the diversity order is $G_{d}=\left(\frac{a}{2}\right)K$. 

\subsection{Average Symbol Error Probability}

\begin{theorem}
The ASEP can be given by
\begin{align}\label{eq:ASEPclosedform}
P_{e}&=\frac{p\sqrt{q}}{2\sqrt{\pi}}\sum_{n_{1} = 0}^{\infty}...\sum_{n_{K} = 0}^{\infty} \prod_{k=1}^{K}\frac{\left(-1\right)^{n_{k}}\left(\sqrt{\frac{P_{L,k} }{\bar{\gamma}b^{2}_{k}}}\right)^{a_{k} + n_{k}}}{n_{k}!\left(a_{k} + n_{k}\right)\Gamma (a_{k})}\nonumber\\
&\times \frac{1}{q^{\frac{\sum_{k=1}^{K}\left(a_{k} + n_{k}\right)}{2} + \frac{1}{2}}}\Gamma\left(\frac{\sum_{k=1}^{K}\left(a_{k} + n_{k}\right)}{2}+\frac{1}{2}\right),
\end{align}
where $p$ and $q$ are constants representing the type of modulation. 
\end{theorem}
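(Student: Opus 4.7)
The plan is to start from the standard expression for the ASEP of a coherent modulation scheme in AWGN, namely
\begin{equation*}
P_{e}=p\,\mathbb{E}\!\left[Q\!\left(\sqrt{2q\gamma^{*}}\right)\right]=p\int_{0}^{\infty}Q\!\left(\sqrt{2q\gamma}\right)f_{\gamma^{*}}(\gamma)\,d\gamma,
\end{equation*}
where $p$ and $q$ are the modulation constants introduced in the statement. Rather than working with the PDF (which is cumbersome because $f_{\gamma^{*}}$ is a sum over $K$ branches of products involving $F_{\gamma_{k}}$), I would use the CDF-based form obtained via integration by parts,
\begin{equation*}
P_{e}=\frac{p\sqrt{q}}{2\sqrt{\pi}}\int_{0}^{\infty}\frac{e^{-q\gamma}}{\sqrt{\gamma}}\,F_{\gamma^{*}}(\gamma)\,d\gamma,
\end{equation*}
which follows from $\frac{d}{d\gamma}Q(\sqrt{2q\gamma})=-\frac{\sqrt{q}}{2\sqrt{\pi\gamma}}e^{-q\gamma}$ together with the boundary behaviour $F_{\gamma^{*}}(0)=0$ and $Q(\sqrt{2q\gamma})\to0$ as $\gamma\to\infty$.

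Next I would substitute the multi-series representation of $F_{\gamma^{*}}$ given in \eqref{eq:CDFRISselectionsimple} into this integral. After interchanging the order of integration and the (formally absolutely convergent) multi-sum, the entire $\gamma$-dependence collapses to a single monomial times the exponential weight $e^{-q\gamma}$, so the remaining integral is
\begin{equation*}
\int_{0}^{\infty}\gamma^{\,\frac{1}{2}\sum_{k=1}^{K}(a_{k}+n_{k})-\frac{1}{2}}\,e^{-q\gamma}\,d\gamma,
\end{equation*}
which is a standard Gamma integral of the form $\int_{0}^{\infty}\gamma^{s-1}e^{-q\gamma}d\gamma=q^{-s}\Gamma(s)$ with $s=\tfrac{1}{2}\sum_{k}(a_{k}+n_{k})+\tfrac{1}{2}$. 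Plugging this back and collecting constants yields exactly \eqref{eq:ASEPclosedform}.

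The main obstacle I anticipate is not any one of the calculus steps but rather the justification of the swap between the $K$-fold infinite summation and the integral over $\gamma$. Because the CDF expansion in \eqref{eq:CDFRISselectionsimple} comes from the alternating Taylor series of the lower incomplete Gamma function, each individual term is not positive, so one cannot invoke Tonelli directly; I would instead appeal to the dominated-convergence/Fubini route, bounding the partial sums by the (positive) CDF itself, which is trivially integrable against $e^{-q\gamma}/\sqrt{\gamma}$ on $(0,\infty)$. Once that technicality is handled, the rest of the derivation reduces to identifying the Gamma integral and tidying the algebra, which completes the proof.
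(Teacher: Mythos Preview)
Your proposal is correct and follows essentially the same approach as the paper: start from the CDF-based ASEP representation $P_{e}=\frac{p\sqrt{q}}{2\sqrt{\pi}}\int_{0}^{\infty}\gamma^{-1/2}e^{-q\gamma}F_{\gamma^{*}}(\gamma)\,d\gamma$, substitute the multi-series expansion \eqref{eq:CDFRISselectionsimple}, and evaluate the resulting Gamma integral via \cite[Eq.~(3.381.4)]{Grad.}. The only differences are that you explicitly derive the CDF-based form by integration by parts (the paper simply cites it) and that you add a Fubini/dominated-convergence justification for the sum--integral interchange, which the paper omits.
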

\begin{proof} 
Our results apply for all general modulation formats that have an ASEP expression of the form	
\begin{align}\label{eq:ASEPgeneral}
P_{e}=E_{\gamma^{*}}\left[pQ(\sqrt{2q\gamma})\right],
\end{align} 
where $ Q(\cdot) $ is the Gaussian Q-function, and $ p $ and $ q $ are modulation-specific constants. Such modulation formats include binary phase-shift keying (BPSK) ($ p = 1, q = 1 $) and M-ary PSK ($ p = 2, q = \sin\left(\frac{2\pi}{M}\right) $).		
The ASEP can be obtained as follows \cite{McKay}
\begin{align}\label{eq:ASEP}
P_{e}=\frac{p\sqrt{q}}{2\sqrt{\pi}}\int_{0}^{\infty}\frac{\exp\left(-q\gamma\right)}{\sqrt{\gamma}}F_{\gamma^{*}}(\gamma) d\gamma.
\end{align} 
Upon inserting \eqref{eq:CDFRISselectionsimple} in \eqref{eq:ASEP}, then 
\begin{align}\label{eq:ASEP1}
P_{e}&=\frac{p\sqrt{q}}{2\sqrt{\pi}}\sum_{n_{1} = 0}^{\infty}...\sum_{n_{K} = 0}^{\infty} \prod_{k=1}^{K}\frac{\left(-1\right)^{n_{k}}\left(\sqrt{\frac{P_{L,k} }{\bar{\gamma}b^{2}_{k}}}\right)^{a_{k} + n_{k}}}{n_{k}!\left(a_{k} + n_{k}\right)\Gamma (a_{k})}\nonumber\\
&\times\int_{0}^{\infty} \gamma^{\frac{\sum_{k=1}^{K}\left(a_{k} + n_{k}\right)}{2} +\frac{1}{2}} \exp\left(-q\gamma\right) d\gamma.
\end{align}
With the help of $\int_{0}^{\infty}x^{v-1}\exp\left(-\mu x\right) dx = \frac{1}{\mu^{v}}\Gamma\left(v\right)$ \cite[Eq. (3.381.4)]{Grad.}, we obtain \eqref{eq:ASEPclosedform} and then the proof is completed.
\end{proof}

{\color{black}
\section{Optimization}\label{Opt} 

In this section, we aim to provide possible solutions to the following optimization questions:
\begin{enumerate}
\item Is it possible to determine the optimal number of reflecting elements on each RIS in order to guarantee a predetermined outage performance $P_\text{out}^\text{th}$?
\item What is the solution to the above problem if the objective is to minimize the total number of reflecting elements?
\item What is the optimal relative placement of RISs, S, and D, that will result in the minimum $P_\text{out}$?
\end{enumerate}

For optimization purpose, it is quite challenging to use the asymptotic OP in \eqref{eq:AsympOut}. Hence, we use the following asymptotic upper bound (UB) on the OP
\begin{equation}\label{eq:AsympOut_UB}
\widetilde{P}_\text{out} = \prod_{k = 1}^{K} \frac{e^{a_k} \left( \sqrt{ \frac{\gamma_\text{out} P_{L,k} } { \bar{\gamma} b^2_k } } \right)^{a_k}}{a_k^{a_k}},
\end{equation}
which is true because 
\[
\frac{\gamma \left(a_{k},\sqrt{\frac{P_{L,k} }{\bar{\gamma}b^{2}_{k}}\gamma_\text{out}}\right)}{\Gamma (a_{k})} \le  \frac{e^{a_k} \left( \sqrt{ \frac{\gamma_\text{out} P_{L,k} } { \bar{\gamma} b^2_k } } \right)^{a_k}}{a_k^{a_k}}.
\]
We can express the UB-OP in terms of the number of RIS elements $\mathbf{N} \triangleq \{N_1,\dots,N_K\}$ as
\begin{equation}\label{eq:AsympOut_UB_N}
\widetilde{P}_\text{out}(\mathbf{N}) \triangleq \prod_{k = 1}^{K} \frac{e^{N_k c_k} x_k ^{N_k c_k}}{ \left(N_k c_k \right)^{N_k c_k}},
\end{equation}
where $c_k \triangleq \frac{m_{k,1} m_{k,2} \Gamma(m_{k,1})^2 \Gamma(m_{k,2})^2}{ m_{k,1} m_{k,2} \Gamma(m_{k,1})^2 \Gamma(m_{k,2})^2 - \Gamma(m_{k,1} + \frac{1}{2} )^2 \Gamma(m_{k,2} + \frac{1}{2} )^2 } - 1$ and $ x_k \triangleq \sqrt{ \frac{\gamma_\text{out} P_{L,k} } { \bar{\gamma} b^2_k } }$.

Let us start with the first optimization problem, which can be formulated as the following feasibility problem
\begin{subequations}\label{p1}
\begin{eqnarray}
\mbox{find}  \ \  \mathbf{N}
\\
\mbox{s.t.}\quad N_k \le N_\text{max}, \ \forall \ k,  \label{N_const} \\
\widetilde{P}_\text{out}(\mathbf{N}) = P_\text{out}^\text{th}, \label{out_const} \\
N_k \in \mathbb{Z}^+, \ \forall \ k, \label{integ_const}
\end{eqnarray}
\end{subequations}
where we can relax the non-convex integer constraint \eqref{integ_const}, because the solution can be obtained by finding the ceil-function of optimized $\mathbf{N}$. In order to deal with the non-convex constraint \eqref{out_const}, we first take natural log of both sides of \eqref{out_const} because log is a monotonically increasing function of its argument. Therefore, problem \eqref{p1} can be formulated as
\begin{subequations}\label{p2}
\begin{eqnarray}
\mbox{find}  \ \  \mathbf{N} \quad
\mbox{s.t.} \quad \eqref{N_const}, \\
f(\mathbf{N}) = \ln P_\text{out}^\text{th},  \label{log_out_const}
\end{eqnarray}
\end{subequations}
where we achieve \eqref{log_out_const} because 
\begin{align}
\ln  \widetilde{P}_\text{out}(\mathbf{N}) &= \sum_{k=1}^K N_k c_k + N_k c_k \ln x_k - N_k c_k \ln (N_k c_k) \notag \\
&\triangleq f(\mathbf{N}).
\end{align}
Note that $ f(\mathbf{N})$ is concave in $\mathbf{N}$. This is because $N_k c_k \ln (N_k c_k)$ is a convex function (second-order derivative of $N_k c_k \ln (N_k c_k)$ is positive) and sum of concave functions is also concave. However, the equality constraint \eqref{log_out_const} is still non-convex, which makes the problem \eqref{p2} non-convex. Therefore, we solve \eqref{p2} by proposing a path-following algorithm, which generates a sequence of improved feasible points for \eqref{p2} and finally converges to a locally optimal solution. 

Let $\mathbf{N}^{(\kappa)} \triangleq \{N^{(\kappa)} _1,\dots,N^{(\kappa)} _K\} $ be a feasible point for \eqref{p2}, which is found from $(\kappa-1)$-th iteration. We need to approximate \eqref{log_out_const} with an affine function, which is given by
\begin{align}
 f(\mathbf{N}) &\le  f^{(\kappa)}(\mathbf{N})  \notag \\
      &=  \sum_{k=1}^K N^{(\kappa)} _k c_k + c_k \ln \left( \frac{x_k}{N^{(\kappa)}_k c_k} \right) N_k.
\end{align}
Therefore, we solve the following convex problem to generate a next feasible point $\mathbf{N}^{(\kappa+1)}$ for \eqref{p2}
\begin{subequations}\label{p3}
\begin{eqnarray}
\mbox{find}  \ \  \mathbf{N} \quad
\mbox{s.t.} \quad \eqref{N_const}, \\
f^{(\kappa)} (\mathbf{N}) = \ln P_\text{out}^\text{th}.  \label{log_out_const_app}
\end{eqnarray}
\end{subequations}
We will shortly outline the pseudo-code to solve the problem \eqref{p1}. 

Let us now formulate the second problem mentioned at the start of Section \ref{Opt}, which solves the problem \eqref{p1} with the objective of minimizing the total number of reflecting elements. The formulation is given below:
\begin{subequations}\label{p4}
\begin{align} 
\min_{\mathbf{N}} \ \ \sum_{k=1}^K N_k  \\
 \mbox{s.t.}\quad  \eqref{N_const},\eqref{out_const},\eqref{integ_const}.
\end{align}
\end{subequations}
By following the similar steps as outlined above in \eqref{p2}-\eqref{p3}, we can iteratively solve the following problem at $\kappa$-th iteration to generate a next feasible point $\mathbf{N}^{(\kappa+1)}$ for \eqref{p4}
\begin{align} \label{p5}
\min_{\mathbf{N}} \ \ \sum_{k=1}^K N_k  \quad
 \mbox{s.t.}\quad  \eqref{N_const},\eqref{log_out_const_app}.
\end{align}
The objective of solving the above problem is to figure out whether we can save the number of reflecting elements by using a single RIS or by using multiple RISs, which will be demonstrated through simulations in Section \ref{Numerical Results}. The path-following procedure to solve the problems \eqref{p1}/\eqref{p4} is provided in Algorithm \ref{alg1}.

\begin{algorithm}[t]
	\caption{Algorithm for solving optimization problem \eqref{p1}/\eqref{p4} } \label{alg1}
	\begin{algorithmic}[1]
		\State \textbf{Initialization}: Initialize a feasible point
$\mathbf{N}^{(0)}$.  Set $\kappa=0$.
		\State\textbf{Repeat until convergence of $N_k$ $\forall$ $k$. }:
		Generate $\mathbf{N}^{(\kappa+1)}$ by solving \eqref{p3}/\eqref{p5}. Reset $\kappa\leftarrow \kappa+1$.
		 \State \textbf{Output} $\mathbf{N}^{(\kappa)}$ as the optimal solution of \eqref{p1}/\eqref{p4}.
	\end{algorithmic}
\end{algorithm}

Finally, let us now formulate the third problem mentioned at the start of Section \ref{Opt}, which is about finding the optimal relative placement of RISs, S, and D with the objective of minimizing the $P_\text{out}$. This can be formulated as follows
\begin{subequations}\label{p6}
\begin{eqnarray}
\min_{\mathbf{d}}   \ \  \widetilde{P}_\text{out}(\mathbf{d})
\\
\mbox{s.t.}\quad 0 \le d_{k,1} \le D, \ \forall \ k,  \label{d_const} 
\end{eqnarray}
\end{subequations}
where $\mathbf{d} \triangleq \{d_{1,1},\dots d_{K,1}\}$ is a vector of distances between S and $K$ RISs, $D$ is the distance between S and D, and $\widetilde{P}_\text{out}(\mathbf{d})$ is the asymptotic UB on the OP, which is given by
\begin{align}
\widetilde{P}_\text{out}(\mathbf{d}) \triangleq  \prod_{k=1}^K z_k \left(d_{k,1} (D - d_{k,1}) \right)^a_k,
\end{align}
where $z_k \triangleq \frac{e^{a_k} \left(  \frac{\gamma_\text{out}}{\bar{\gamma} b_k^2} \right)^{a_k/2} (4 \pi)^{2a_k}} { ( \lambda^4 G_{k,1} G_{k,2} \epsilon_k )^{a_k/2} a_k^{a_k} }$. The formulation of \eqref{p6} assumes linear placement of RISs between S and D because $d_{k,2} = D - d_{k,1}$, however, this assumption is just for convenience purpose as our target is only to find the optimal relative placement of RISs, S, and D. 

The problem \eqref{p6} is non-convex due to non-convex objective function. In order to deal with it, we first take log of the OP as log is a monotonically increasing function of its argument. Therefore, the problem \eqref{p6} can be formulated as
\begin{eqnarray} \label{p7}
\min_{\mathbf{d}}   \ \ g(\mathbf{d}) \quad 
\mbox{s.t.} \quad \eqref{d_const},
\end{eqnarray}
where
\begin{align} 
g(\mathbf{d}) &\triangleq  \ln  \widetilde{P}_\text{out}(\mathbf{d}) \notag \\
 &= \sum_{k=1}^K \ln z_k + a_k \left(\ln d_{k,1} + \ln ( D - d_{k,1} ) \right) .
\end{align}
Note that $g(\mathbf{d})$ is concave in $\mathbf{d}$ because $D \ge d_{k,1}$ $\forall$ $k$. This implies that the objective function in \eqref{p7} is the minimization of a concave function, which requires us to only evaluate the OP, $\widetilde{P}_\text{out}(\mathbf{d})$, at the extreme points in the set, i.e., $\mathbf{d} = \left\{ \{0,0,\dots,0\},  \{D,0,\dots,0\} , \hdots, \{D,D,\dots,D\} \right\}$. We can also verify this by linearizing the objective function about $\mathbf{d}^{(\kappa)}$ at the $\kappa$-th iteration and iteratively solving the following problem at the $\kappa$-th iteration to generate a next feasible point $\mathbf{d}^{(\kappa+1)}$ for \eqref{p7}
\begin{eqnarray} \label{p8}
\min_{\mathbf{d}}   \ \ g^{(\kappa)}(\mathbf{d}) \quad 
\mbox{s.t.} \quad \eqref{d_const},
\end{eqnarray}
where 
\begin{align}
 g(\mathbf{d}) &\le  g^{(\kappa)}(\mathbf{d})  \notag \\
      &=  \sum_{k=1}^K \ln z_k + a_k \left(\ln d^{(\kappa)}_{k,1} + \ln ( D - d^{(\kappa)}_{k,1} ) \right) \nonumber\\
      & + a_k \frac{2 d^{(\kappa)}_{k,1} - D }{D - d^{(\kappa)}_{k,1} }  + a_k  \frac{D - 2 d^{(\kappa)}_{k,1}  }{ d^{(\kappa)}_{k,1} (D - d_{k,1}^{(\kappa)}) }   d_{k,1}.
\end{align}
}

\section{Numerical Results}
\label{Numerical Results}
In this section, the numerical results corresponding to the considered multiple RISs system are verified by Monte-Carlo simulations and used to validate the theoretical analyses. Unless otherwise stated, in all presented illustrations, the carrier frequency is assumed to be $f_{c} = 2.4$ GHz, the gains of the RISs in the first and second hops are, respectively given as $ G_{k,1} = G_{k,2} = 5$ dB, and $\gamma_\text{out} = 0$ dB.

In Fig. \ref{Pout_different_K_N}, we validate the achieved analytical and asymptotic results via comparing them with simulation results. It is obvious from this figure that the analytical results match the simulation ones. In addition, it can be seen that the asymptotic results converge to the analytical and simulation results at very high values of SNR. The figure clearly shows that both the number of reflecting elements ($N$) and number of RISs ($K$) affect the system diversity order ($G_{d}$), which coincides with the results achieved in Section \ref{EPA}. Clearly, as either $N$ or $K$ increases, the higher the diversity order ($G_{d}$) and the better the achieved performance.

\begin{figure}[]
	\includegraphics[scale=0.5]{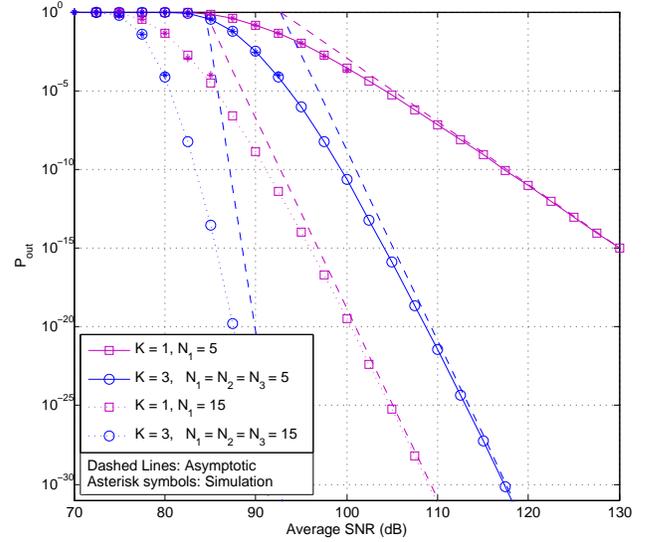}
	\centering
	\caption{The outage probability versus average SNR with different numbers of RIS $K\in\left\{1, 3\right\}$ and reflecting RIS elements $N_{1}, N_{2}, N_{3}\in\left\{5, 15\right\}$. $\left(m_{k,1}, m_{k,2}\right) = \left(1, 1\right)$, $\left(\Omega_{k,1}, \Omega_{k,2}\right) = \left(1, 1\right)$ and $\left(d_{k,1}, d_{k,2}\right) = \left(5 \hspace{0.15cm}\text{m}, 5 \hspace{0.15cm}\text{m}\right)$.}	
	\label{Pout_different_K_N}
\end{figure}

\begin{figure}[]
	\includegraphics[scale=0.5]{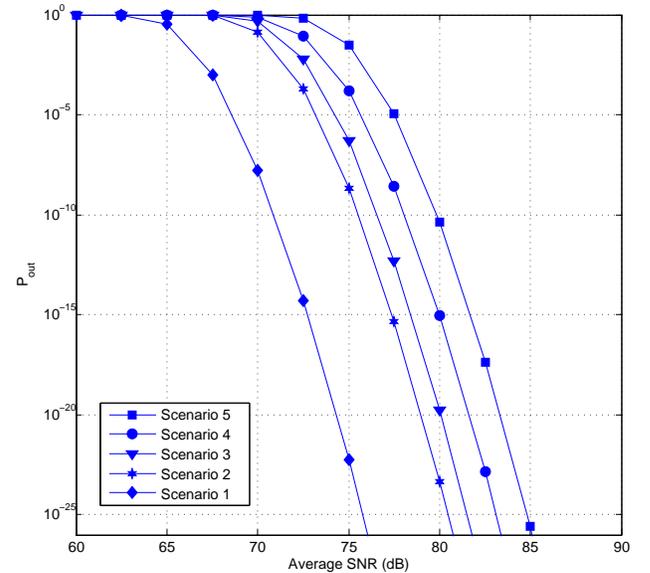}
	\centering
	\caption{The outage probability versus average SNR for five scenarios considering the impact of number of reflecting elements, number of RISs, and distances between source to RISs and between RISs to destination. $\left(m_{k,1}, m_{k,2}\right) = \left(1, 1\right)$ and $\left(\Omega_{k,1}, \Omega_{k,2}\right) = \left(1, 1\right)$.}	
	\label{Pout_1_16_2_30_3_20}
\end{figure}

To get more insights into the derived expressions, five important practical scenarios are studied in Fig. \ref{Pout_1_16_2_30_3_20}. The aim of this figure is to study the joint impact of number of reflecting elements ($N$), number of RISs ($K$), and distances between source to RISs and between RISs to destination on the system performance. In this figure, the distance between the source and destination is kept as 10 m and the total number of reflecting elements is also kept as 60. In the first scenario, we have one RIS with $N_{1}=60$. In the second scenario, we have two RISs serially distributed between the source and destination with $N_{1}=N_{2}=30$. The third scenario is a modified version of the second one where here the two RISs are distributed in parallel between the source and destination. In the fourth scenario, we have three RISs serially distributed between the source and destination with $N_{1}=N_{2}=N_{3}=20$. The fifth scenario is a modified version of the fourth one where the three RISs are distributed in parallel between the source and destination. In Scenario 1, we have $d_{1,1}=d_{1,2}=5$ m. In Scenario 2, we have $d_{1,1}=3.3$ m, $d_{1,2}=6.7$ m and $d_{2,1}=6.7$ m, $d_{2,2}=3.3$ m, whereas in Scenario 3, we have $d_{1,1}=d_{1,2}=5$ m and $d_{2,1}=d_{2,2}=5$ m. In Scenario 4, we have $d_{1,1}=2.5$ m, $d_{1,2}=7.5$ m and $d_{2,1}=5$ m, $d_{2,2}=5$ m and $d_{3,1}=7.5$ m, $d_{3,2}=2.5$ m, whereas in Scenario 5, we have $d_{1,1}=d_{1,2}=5$ m and $d_{2,1}=d_{2,2}=5$ m and $d_{3,1}=d_{3,2}=5$ m. It is obvious from this figure that the diversity order ($G_{d}$) for the five scenarios is equal as the five curves have the same slope. We can also see that Scenario 1 is outperforming the other four scenarios in terms of system coding gain ($G_{c}$). This is a key result and being reported for the first time in this paper where we can see that assigning a fixed number of reflecting elements to one RIS gives better performance than distributing them on multiple RISs. The more the number of RISs, the worse the achieved performance. This is because both the number of RISs and number of reflecting elements also have effect on the system coding gain and not only the diversity order, which is fixed in all these scenarios. Now, comparing the scenario where the RISs are serially distributed between the source and destination with the scenario where they are distributed in parallel, one can see that Scenarios 2 and 4 outperform Scenarios 3 and 5 for both two and three RISs cases. This is because the distance between communicating nodes directly affects the path-loss of the system, and hence, affecting the coding gain. 

\begin{figure}[]
	\includegraphics[scale=0.5]{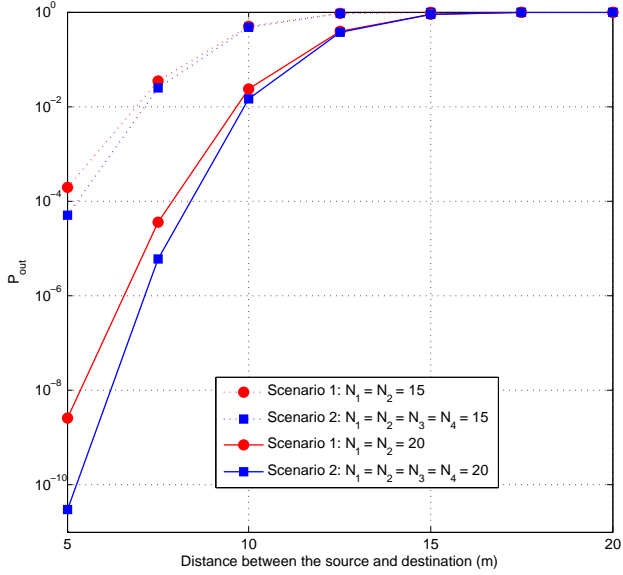}
	\centering
	\caption{The outage probability versus source-to-destination distance for two scenarios: one with two RISs and the other with four RISs. $K\in\left\{2, 4\right\}$, $N_{1}, N_{2}, N_{3}, N_{4}\in\left\{15, 20\right\}$, $\left(m_{k,1}, m_{k,2}\right) = \left(1, 1\right)$ and $\left(\Omega_{k,1}, \Omega_{k,2}\right) = \left(1, 1\right)$.}	
	\label{Pout_distance}
\end{figure}

Two scenarios that investigate the effect of distance between the source and destination on the system performance are portrayed in Fig. \ref{Pout_distance}. In this figure, the source is assumed to be located at a center of a circle of radius 5 m, and the destination is assumed to be located somewhere at the circumference of the circle. Then, we assume the source moving away from the source in a straight line till reaches a point that is 20 m away from the source. Here, we consider two scenarios. In Scenario 1, we have two RISs located at the circumference of the circle constructing an isosceles triangle where its base is between the two RISs. The two equal angles here are $\theta_{1}=\theta_{2}=36.87^{o}$. In Scenario 2, we have two more RISs are added on the circumference of the circle constructing another isosceles triangle where its base is again between the two RISs. The two equal angles here are $\theta_{1}=\theta_{2}=11.53^{o}$. Each RIS is assumed to have the same number of reflecting elements, that is $N_{1}=N_{2}=N_{3}=N_{4}=15$. It is obvious from this figure that increasing the distance between the source and destination deteriorates the system performance, as expected. In addition, it can be seen from this figure that Scenario 2 is outperforming Scenario 1 as the second scenario has a better diversity order. These two scenarios are repeated again, but now when the RISs have more reflecting elements, that is $N_{1}=N_{2}=N_{3}=N_{4}=20$. Again, Scenario 2 is outperforming Scenario 1, but now with better results compared to the first set of curves where less number of reflecting elements were used.



\begin{figure}[]
	\includegraphics[scale=0.5]{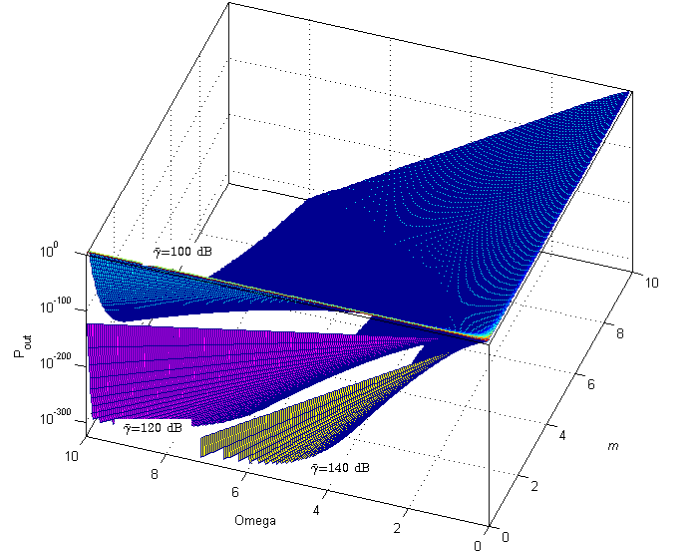}
	\centering
	\caption{The outage probability versus $m$ and $\Omega$ for different values of $\bar{\gamma}$. $K = 3$, $N_{1} = N_{2} = N_{3} = 5$, and $\left(d_{k,1}, d_{k,2}\right) = \left(5 \hspace{0.15cm}\text{m}, 5 \hspace{0.15cm}\text{m}\right)$.}	
	\label{threedimPout}
\end{figure}

The impact of channel scale parameter ($\Omega$) and channel shape parameter ($m$) on the system performance is studied in Fig. \ref{threedimPout}. In this figure, we assume the first hops and second hops of all RISs have the same channel scale parameter ($\Omega_{k,1}=\Omega_{k,2}=\Omega$). Similarly, we assume the first hops and second hops of all RISs have the same channel shape parameter ($m_{k,1}=m_{k,2}=m$). Clearly, we can see that increasing either $\Omega$ or $m$ enhances the system performance. It is important to mention here that $\Omega$ affects the system performance through the coding gain, whereas $m$ affects the performance through the diversity order.

\begin{figure}[]
	\includegraphics[scale=0.5]{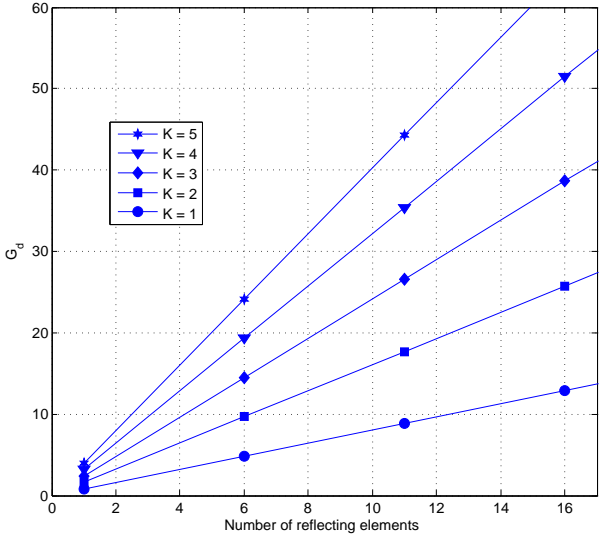}
	\centering
	\caption{The diversity gain versus numbers of reflecting RIS elements for different numbers of RISs. $\left(m_{k,1}, m_{k,2}\right) = \left(1, 1\right)$, $\left(\Omega_{k,1}, \Omega_{k,2}\right) = \left(1, 1\right)$, $\left(d_{k,1}, d_{k,2}\right) = \left(5 \hspace{0.15cm}\text{m}, 5 \hspace{0.15cm}\text{m}\right)$ and average SNR $\bar{\gamma} = 80$ dB.}	
	\label{diversitygain_N}
\end{figure}

A comparison between the effect of number of RISs ($K$) and number of reflecting elements ($N$) on the system diversity order is studied in Fig. \ref{diversitygain_N}. If we take the case where $N=5$ and $K=1$, we can see that the diversity order of the system here is $G_{d}\approx 5$. In addition, for the case where $N=1$ and $K=5$, we can see that the diversity order is also $G_{d}\approx 5$. This means that both system parameters $K$ and $N$ have almost the same effect on the system diversity gain.

The ASEP for two modulation schemes is shown in Fig. \ref{ASEP}: BPSK ($p = q = 1$) and QPSK ($p = 1, q = 0.5$). We can
see that the QPSK is outperforming the BPSK, as expected. This impact comes in terms coding gain. In addition, the impact of
increasing the channel scale parameter $m$ and the number of reflecting elements $N$ is clear here through increasing the system diversity order.

\begin{figure}[]
	\includegraphics[scale=0.5]{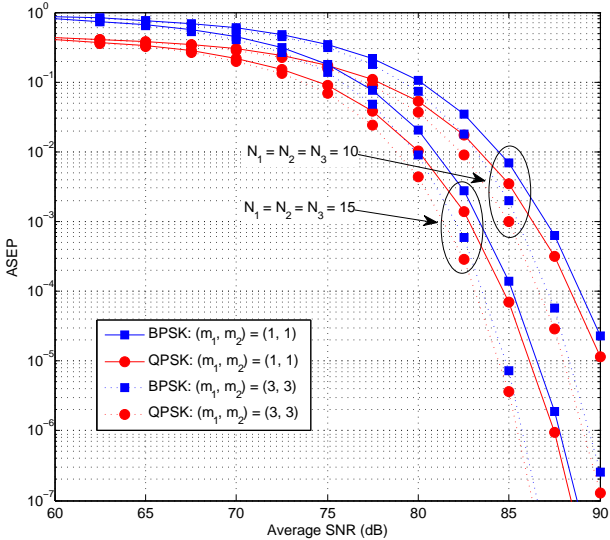}
	\centering
	\caption{The ASEP versus average SNR for BPSK and QPSK with different numbers of reflecting RIS elements and different values of Nakagami-\emph{m}. $\left(\Omega_{k,1}, \Omega_{k,2}\right) = \left(1, 1\right)$ and $\left(d_{k,1}, d_{k,2}\right) = \left(5 \hspace{0.15cm}\text{m}, 5 \hspace{0.15cm}\text{m}\right)$.}	
	\label{ASEP}
\end{figure}

In Fig. \ref{ASEP_identical_non}, we study a very important aspect in multi-RIS-aided networks that is the serial RIS configuration where the distance between the source and destination is fixed and the RISs are distributed between them in a serial way, and the parallel RIS configuration where the RISs are located somewhere at the midway between the source and destination. We also investigate in this figure the impact of reflecting elements' distribution where a fixed number of reflecting elements is one time equally distributed among the multiple RISs and in other time is distributed in a random (unequal) manner. In this figure, the distance between the source and destination is kept as 10 m and the total number of reflecting elements is also kept as 24 with three RISs. In Scenario 1, we locate the three RISs in a parallel configuration ($d_{1,1}=d_{2,1}=d_{3,1}=5$ m, $d_{1,2}=d_{2,2}=d_{3,2}=5$ m) all located somewhere in the midway between the source and destination (circle-marked solid-line). In Scenario 2, we locate the three RISs in a serial configuration ($d_{1,1}=2.5$ m, $d_{1,2}=7.5$ m and $d_{2,1}=5$ m, $d_{2,2}=5$ m and $d_{3,1}=7.5$ m, $d_{3,2}=2.5$ m) somewhere on a straight line between the source and destination (square-marked solid-line). Obviously, Scenario 2 is outperforming Scenario 1 in terms of coding gain as the distance impact is embedded in the path-loss ($P_{L,k}$), which in turn affects the coding gain of the system. It is worthwhile to mention here that as the total number of reflecting elements ($N$) is the same in both scenarios, the diversity order of the system stayed the same in these two figures. Another important aspect studied in this figure is the distribution of the total number of reflecting elements among the three RISs in both the parallel and serial configurations (circle and square-marked dotted-lines). It can be seen that the scenarios where the total number of reflecting elements is randomly distributed among the multiple RISs gives better performance than the scenario where the reflecting elements are equally distributed among the RISs. This is because the random distribution of reflecting elements gives better coding gain than the equal distribution scenario.

\begin{figure}[]
	\includegraphics[scale=0.5]{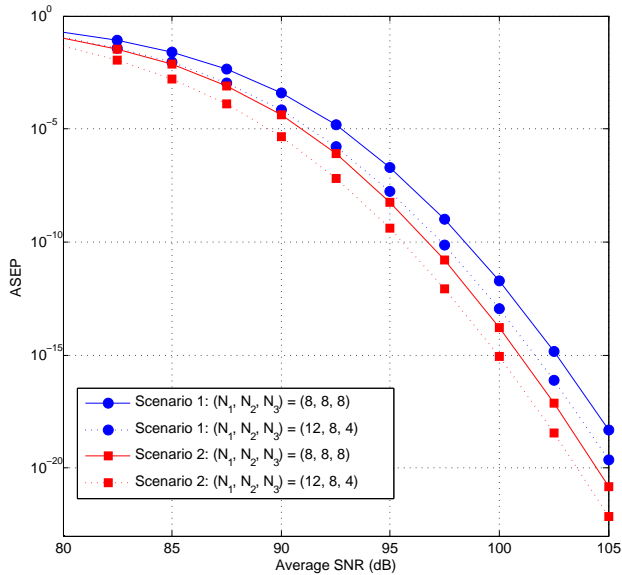}
	\centering
	\caption{The ASEP of BPSK versus average SNR for two scenarios. $K = 3$, $\left(m_{k,1}, m_{k,2}\right) = \left(1, 1\right)$ and $\left(\Omega_{k,1}, \Omega_{k,2}\right) = \left(1, 1\right)$.}	
	\label{ASEP_identical_non}
\end{figure}

\begin{figure}[]
	\includegraphics[scale=0.5]{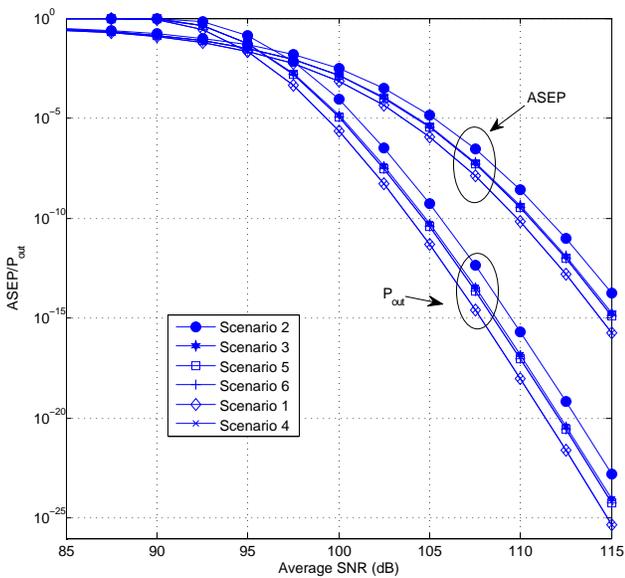}
	\centering
	\caption{The ASEP of BPSK and outage probability versus average SNR for various scenarios of RIS location. $K = 3$, $\left(m_{k,1}, m_{k,2}\right) = \left(1, 1\right)$, $\left(\Omega_{k,1}, \Omega_{k,2}\right) = \left(1, 1\right)$ and $\left(d_{k,1}, d_{k,2}\right) = \left(5 \hspace{0.15cm}\text{m}, 5 \hspace{0.15cm}\text{m}\right)$}	
	\label{ASEPPout_room}
\end{figure}

A very important aspect to study in multi-RIS-aided networks is the location of RISs with respect to source and destination locations. To address this very important point, we consider here a hall of $10\times 20$ m$^{2}$ area. We also consider that the height of the source, RISs, and destination from the ground is the same, say 1.5 m. Considering the hall plotted in a vertical way, that is the width on the $x$-axis and the length on the $y$-axis, we assume that the zero coordinates ($0,0$) are represented by the bottom left corner of the room (reference point). We also assume that the source is located at coordinates ($5,0$) and the destination at ($5,20$). This scenario could represent the practical condition where the source and destination are far away from each other inside the hall. Now, we study the impact of locating two RISs in different locations on side walls on the system performance in Fig. \ref{ASEPPout_room}. We consider for that purpose 7 different scenarios as follows: Scenario 1: RIS 1 at ($10,0$) and RIS 2 at ($0,0$), Scenario 2: RIS 1 at ($10,10$) and RIS 2 at ($0,10$), Scenario 3: RIS 1 at ($10,0$) and RIS 2 at ($10,10$) or RIS 1 at ($10,0$) and RIS 2 at ($0,10$), Scenario 4: RIS 1 at ($10,0$) and RIS 2 at ($10,20$) or RIS 1 at ($10,0$) and RIS 2 at ($0,20$), Scenario 5: RIS 1 at ($10,5$) and RIS 2 at ($0,15$) or RIS 1 at ($10,5$) and RIS 2 at ($10,15$), and Scenario 6: RIS 1 at ($10,5$) and RIS 2 at ($0,5$) or RIS 1 at ($10,15$) and RIS 2 at ($0,15$). According to the distances between the RISs to source and destination, we can divide these six scenarios into four groups from worst to best: Scenario 2, Scenario 3, Scenarios 5 and 6, and lastly Scenarios 1 and 4. Clearly, we can see that locating the two RISs in Scenario 2 at the horizontal line of symmetry on side walls between the source and destination results in the worst performance. Then, Scenario 3 comes where only one of the two RISs is located at the midway and the other RIS at the corner near either the source or destination. The third set of curves is for both Scenario 5 and 6 where the the two RISs are located either 5 m away from the source or destination on the vertical axis (on side walls). Finally, the best performance is achieved in Scenarios 1 and 4 where the two RISs are located 10 m away from either the source or destination, but on the same horizontal axis. Clearly, the performance in these six scenarios is determined by the distances between the RISs to source and destination, which in turn affects the path-loss, and hence, the coding gain of the system. In summary, for the situation where the source and destination are located at the far sides of the hall, that is the source at ($5,0$) and the destination at ($5,20$), it is recommended to have the two RISs located at the hall corners (any two corners of the four).

\begin{figure}[]
	\includegraphics[scale=0.28]{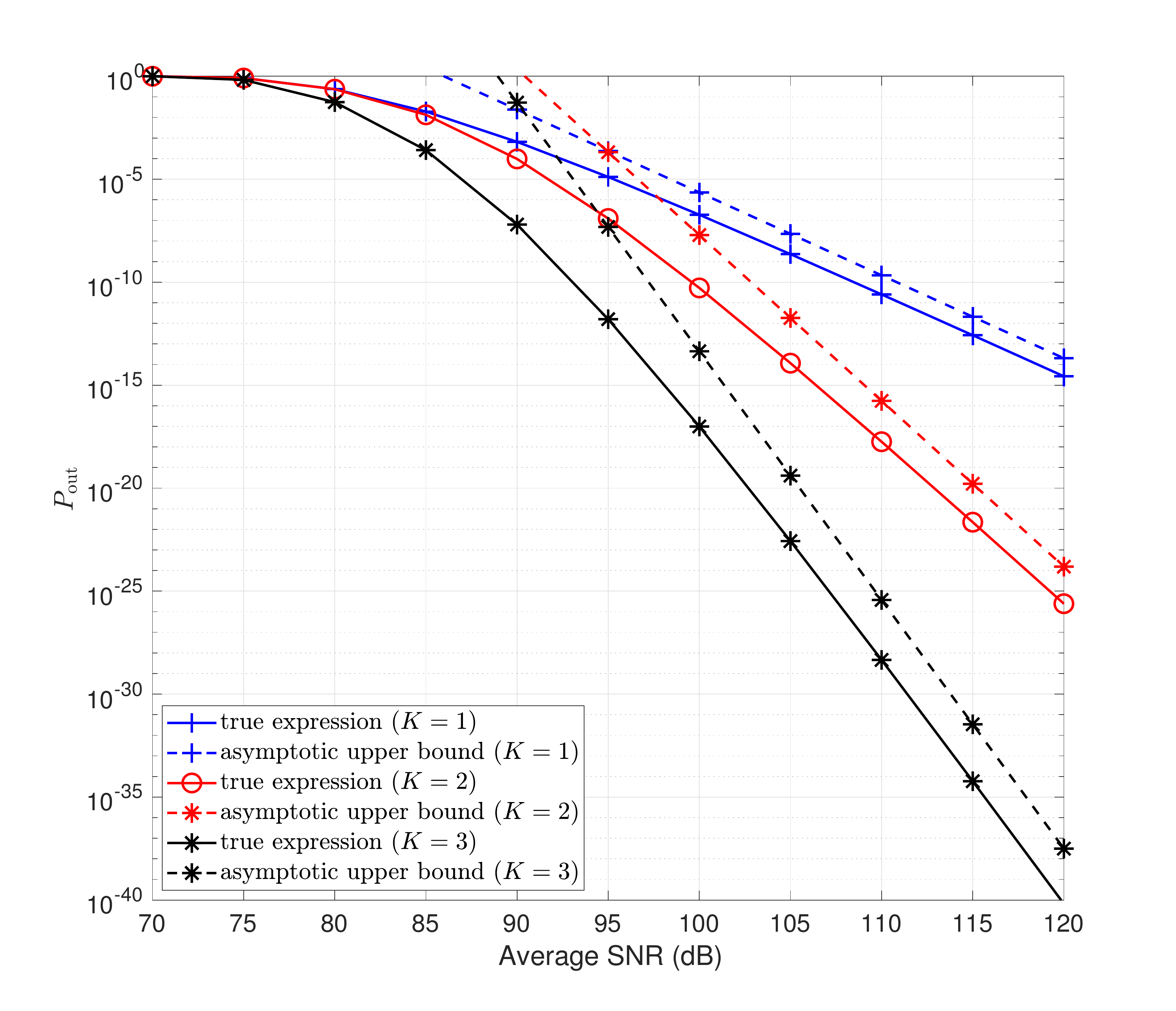}
	\centering
	\caption{The true expression and asymptotic upper bound of outage probability versus average SNR for $K = 3$, $\left(m_{k,1}, m_{k,2}\right) = \left(1, 1\right)$, $\left(\Omega_{k,1}, \Omega_{k,2}\right) = \left(1, 1\right)$, $\left(d_{1,1}, d_{1,2}\right) = \left(1 \hspace{0.15cm}\text{m}, 9 \hspace{0.15cm}\text{m}\right)$, $\left(d_{2,1}, d_{2,2}\right) = \left(5 \hspace{0.15cm}\text{m}, 5 \hspace{0.15cm}\text{m}\right)$, and $\left(d_{3,1}, d_{3,2}\right) = \left(9 \hspace{0.15cm}\text{m}, 1 \hspace{0.15cm}\text{m}\right)$,}	
	\label{Fig0}
\end{figure}

{\color{black}Finally, we provide some simulations for the optimization analysis, which will provide answer to the questions posed at the start of Section \ref{Opt}. In order to simulate results of this section, we set $\left(m_{k,1}, m_{k,2}\right) = \left(1, 1\right)$ and $\left(\Omega_{k,1}, \Omega_{k,2}\right) = \left(1, 1\right)$.  Since, these results are based on the asymptotic upper bound on the OP, which is provided by the expression \eqref{eq:AsympOut_UB}, we first plot the true expression and asymptotic upper bound of OP versus average SNR  in Fig. \ref{Fig0} for $K = 3$,  $\left(d_{1,1}, d_{1,2}\right) = \left(1 \hspace{0.15cm}\text{m}, 9 \hspace{0.15cm}\text{m}\right)$, $\left(d_{2,1}, d_{2,2}\right) = \left(5 \hspace{0.15cm}\text{m}, 5 \hspace{0.15cm}\text{m}\right)$, and $\left(d_{3,1}, d_{3,2}\right) = \left(9 \hspace{0.15cm}\text{m}, 1 \hspace{0.15cm}\text{m}\right)$, i.e., the three RISs are placed at different places, one close to the S, one mid-way, and one near to the D. It can be observed from Fig. \ref{Fig0} that the UB-OP \eqref{eq:AsympOut_UB} provides an upper bound on the true expression asymptotically.

\begin{figure}[]
	\centering
	\includegraphics[width=0.5 \textwidth]{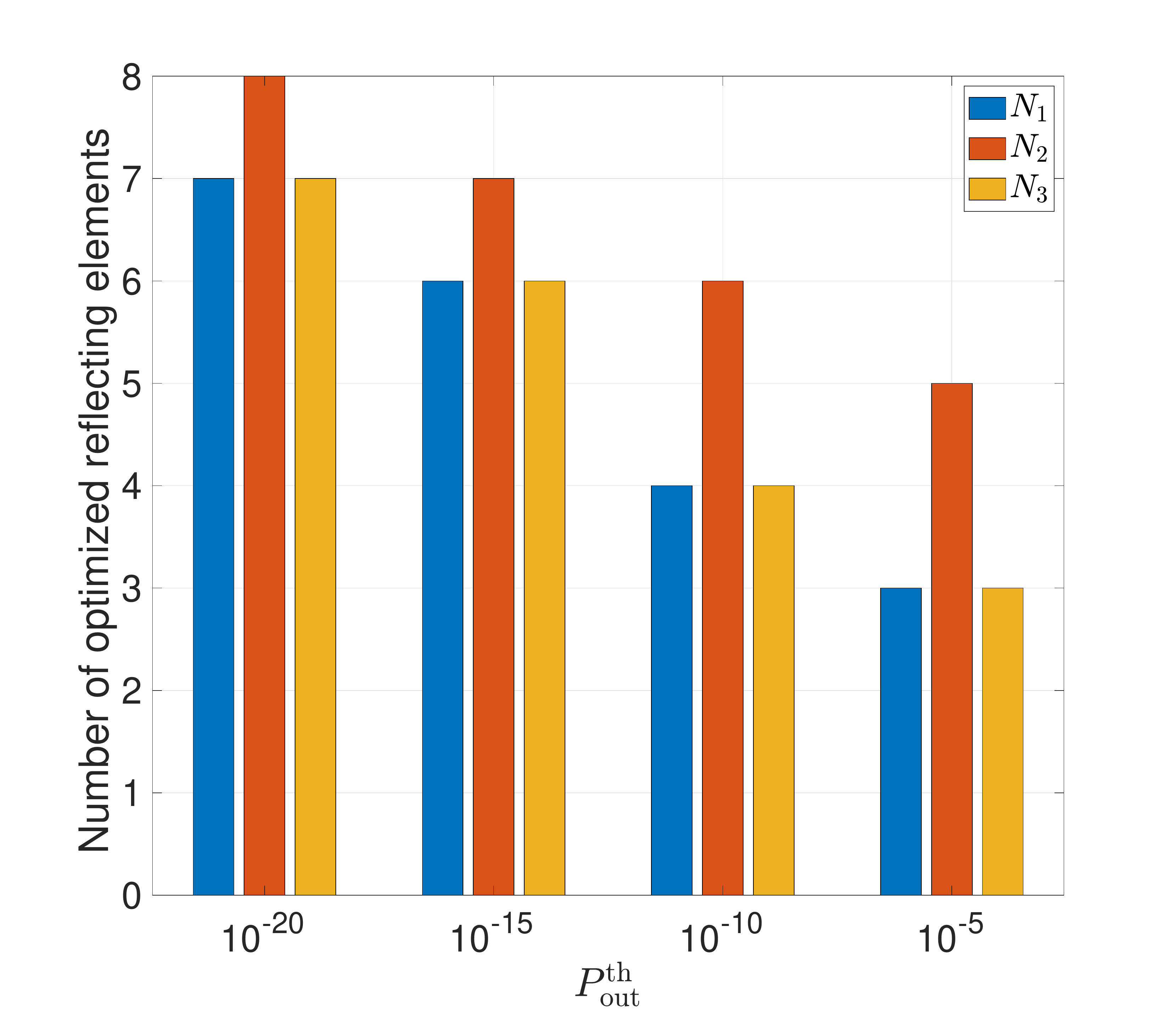}
	\caption{Optimized number of reflecting elements versus different values of threshold outage probability by solving problem \eqref{p1} using Alg. \ref{alg1} for $K = 3$, $\bar{\gamma} = 100$ dB, $\left(m_{k,1}, m_{k,2}\right) = \left(1, 1\right)$, $\left(\Omega_{k,1}, \Omega_{k,2}\right) = \left(1, 1\right)$, $\left(d_{1,1}, d_{1,2}\right) = \left(1 \hspace{0.15cm}\text{m}, 9 \hspace{0.15cm}\text{m}\right)$, $\left(d_{2,1}, d_{2,2}\right) = \left(5 \hspace{0.15cm}\text{m}, 5 \hspace{0.15cm}\text{m}\right)$, and $\left(d_{3,1}, d_{3,2}\right) = \left(9 \hspace{0.15cm}\text{m}, 1 \hspace{0.15cm}\text{m}\right)$.}
	\label{Fig1}
\end{figure}

Fig. \ref{Fig1} plots the optimized number of reflecting elements versus different values of threshold outage probability $P_\text{out}^\text{th}$ by solving problem \eqref{p1} using Alg. \ref{alg1} for $K = 3$, $\left(d_{1,1}, d_{1,2}\right) = \left(1 \hspace{0.15cm}\text{m}, 9 \hspace{0.15cm}\text{m}\right)$, $\left(d_{2,1}, d_{2,2}\right) = \left(5 \hspace{0.15cm}\text{m}, 5 \hspace{0.15cm}\text{m}\right)$, and $\left(d_{3,1}, d_{3,2}\right) = \left(9 \hspace{0.15cm}\text{m}, 1 \hspace{0.15cm}\text{m}\right)$. As expected, the optimal number of reflection elements decreases with the increase in $P_\text{out}^\text{th}$. Note that the optimal value of $N_2$ is higher than that of $N_1$ and $N_2$. This gives us a hint that more reflecting elements are required by the mid-way placed RIS compared to those, which are placed close to the S or D. This is because the signal reflected by the mid-way placed RIS  experiences larger effective path-loss. This finding is strongly supported by Fig. \ref{Fig2}, which plots the optimized number of reflecting elements by solving problem \eqref{p4} using Alg. \ref{alg1}. Since the objective is to minimize the total number of reflecting elements, the optimal solution is to use a single RIS, which is placed close to the D. Note that due to non-convexity of the original problem \eqref{p4}, the solution depends on the initialization. With a different initialization, we may observe that the optimal solution by Alg. \ref{alg1} is to select a single RIS that is placed close to the S. However, the mid-way placed RIS is never selected because it requires the deployment of larger number of RISs. 

We also solve the problem \eqref{p7} to determine the optimal relative placement of RISs, which is iteratively solved by solving the linearized problem \eqref{p8}. As expected from the analysis provided below \eqref{p7} and from the above explanation, the optimal placement of RISs was at the extreme points, either near to the S or D, depending on the initialized value of the distance vector.
}

\begin{figure}[]
	\centering
	\includegraphics[width=0.5 \textwidth]{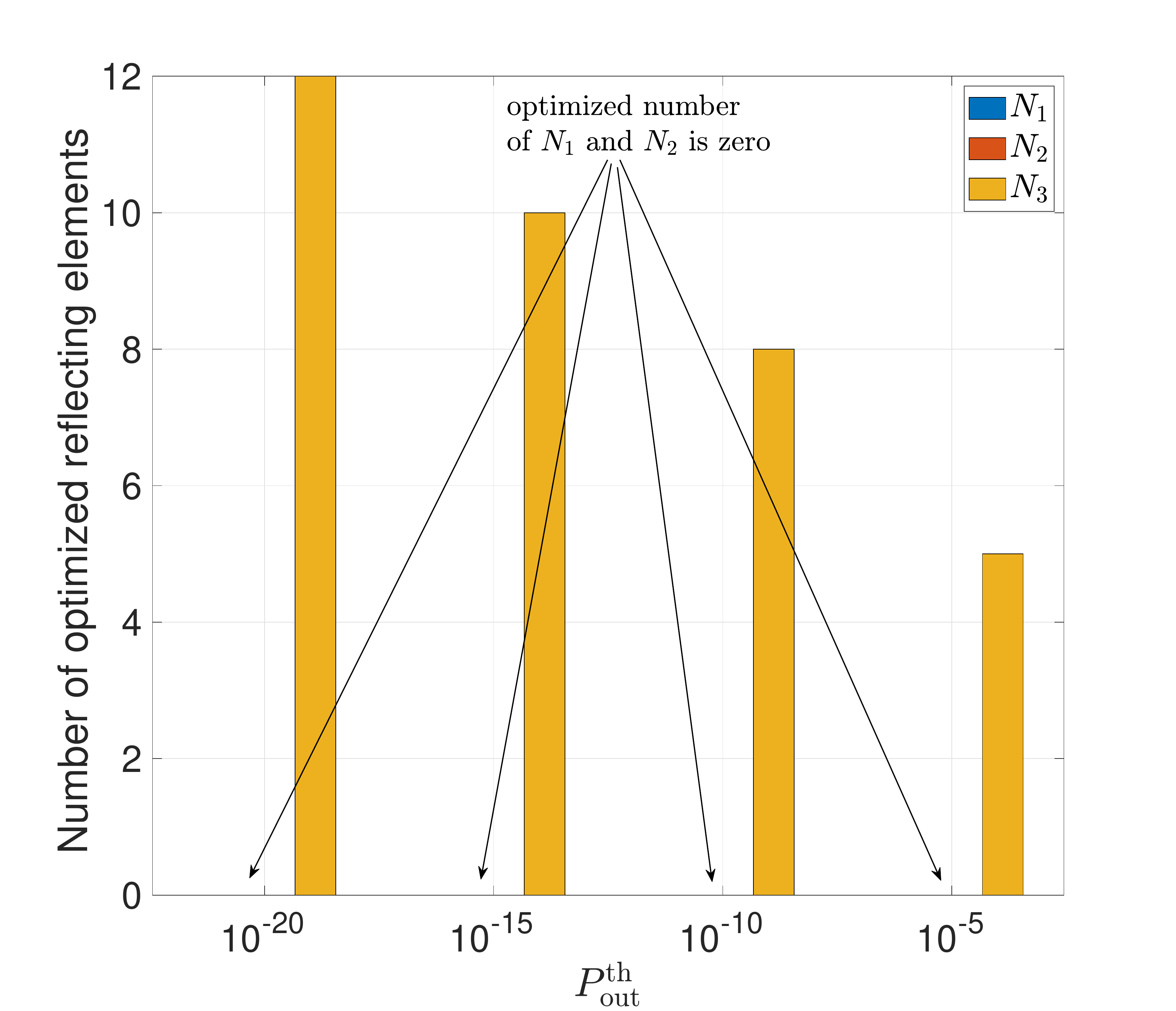}
	\caption{Optimized number of reflecting elements versus different values of threshold outage probability by solving problem \eqref{p4} using Alg. \ref{alg1} for $K = 3$, $\bar{\gamma} = 100$ dB, $\left(m_{k,1}, m_{k,2}\right) = \left(1, 1\right)$, $\left(\Omega_{k,1}, \Omega_{k,2}\right) = \left(1, 1\right)$, $\left(d_{1,1}, d_{1,2}\right) = \left(1 \hspace{0.15cm}\text{m}, 9 \hspace{0.15cm}\text{m}\right)$, $\left(d_{2,1}, d_{2,2}\right) = \left(5 \hspace{0.15cm}\text{m}, 5 \hspace{0.15cm}\text{m}\right)$, and $\left(d_{3,1}, d_{3,2}\right) = \left(9 \hspace{0.15cm}\text{m}, 1 \hspace{0.15cm}\text{m}\right)$.}
	\label{Fig2}
	\vspace{-0.2in}
\end{figure}
\section{Conclusion}\label{C}
This paper studied the performance of multiple RISs-aided networks over Nakagami-$m$ fading channels. In addition, an optimization problem where the optimum number of reflecting elements that guarantee a predetermined outage performance and the optimal relative placement of RISs was formulated and solved. For that purpose, accurate closed-form approximations were first derived for the channel distributions and then used in deriving closed-form approximations for the OP and ASEP assuming i.ni.d channels. Furthermore, to get more insights into the system performance, an asymptotic expression was derived for the OP at the SNR regime and provides closed-form expressions for the system diversity order and coding gain. Results showed that the considered RIS scenario can provide a diversity order of $\left(\frac{a}{2}\right)K$, where $a$ is a function of the Nakagami-$m$ fading parameter $m$, and the number of meta-surface elements $N$, and $K$ is the number of reflecting elements. Furthermore, results showed that distributing the same number of reflecting elements on multiple RISs decreases the system performance compared to assigning them to a single RIS. In addition, random distribution of reflecting elements on several RISs was shown to give better performance than equal distribution. Finally, some new scenarios on RIS location with respect to source and destination and its impact on the system performance were discussed in this work.

	\end{document}